\newtheorem{algorithm}{Algorithm}
\setlist[description]{font=\itshape}
\newcommand{\change}[1]{\textcolor{black}{#1}}
\definecolor{darkred}{rgb}{0.7,0.1,0.1}
\tikzstyle{vertex}=[circle,fill=black!0,minimum size=4pt,inner sep=0pt]
\tikzstyle{smallvertex}=[circle,fill=black!0,minimum size=3pt,inner sep=0pt]
\begin{document}

\title{Do branch lengths help to locate a tree in a phylogenetic network?}


\author{Philippe Gambette \and
        Leo van Iersel \and
        Steven Kelk \and
        Fabio Pardi \and
        Celine Scornavacca 
}


\institute{
            P. Gambette \at
              Universit\'e Paris-Est, LIGM (UMR 8049), CNRS, ENPC, ESIEE Paris, UPEM, F-77454, Marne-la-Vall\'ee, France\\
              \email{philippe.gambette@u-pem.fr}
              \and  
            L. van Iersel \at        
              Delft Institute of Applied Mathematics, Delft University of Technology\\
              Postbus 5031,2600 GA Delft, The Netherlands\\
              \email{l.j.j.v.iersel@gmail.com}
              \and         
            S. Kelk \at
              Department of Data Science and Knowledge Engineering (DKE)\\
              Maastricht University,  P.O. Box 616, 6200 MD, Maastricht, The Netherlands\\
              \email{steven.kelk@maastrichtuniversity.nl}
              \and
                \emph{Corresponding author:} F. Pardi \at
              Institut de Biologie Computationnelle (IBC)\\
              Laboratoire d'Informatique, de Robotique et de Micro\'electronique de Montpellier (LIRMM)\\ 
              CNRS, Universit\'e de Montpellier, France\\
              \email{pardi@lirmm.fr}
              \and
            C. Scornavacca \at
              Institut de Biologie Computationnelle (IBC)\\
              Institut des Sciences de l'Evolution, CC 064 \\
              Place Eug\`ene Bataillon, Montpellier, France\\
              \email{celine.scornavacca@umontpellier.fr}   
}

\date{Received: date / Accepted: date}

\maketitle

\newpage

\begin{abstract} 

Phylogenetic networks are increasingly used in evolutionary biology
to represent the history of species that have undergone reticulate 
events such as horizontal gene transfer, hybrid speciation and recombination.
One of the most fundamental questions that arise in this context is whether 
the evolution of a 
gene with one copy in all species
can be explained by a given network.
In mathematical terms, this is often 
translated in the following way:
is a given phylogenetic tree contained in a given phylogenetic network?
Recently this \emph{tree containment} problem has been widely investigated from 
a computational perspective, but most studies have only focused on the 
topology of the phylogenies, ignoring a piece of information that, 
in the case of phylogenetic trees, is routinely inferred by evolutionary 
analyses: branch lengths. These measure the amount of change (e.g., nucleotide substitutions) 
that has occurred along each branch of the phylogeny. 
Here, we study a number of versions of the tree containment problem that explicitly
account for branch lengths. 
We show that, although length information has the potential to locate more precisely 
a tree within a network, the problem is computationally hard in its most general form.
On a positive note, for a number of special cases of biological relevance,
we provide algorithms that solve this problem efficiently. 
This includes the case of networks of limited complexity, 
for which it is possible to recover, among the trees contained by the network  
with the same topology as 
the input tree, the closest one in terms of branch lengths.


\keywords{Phylogenetic network \and tree containment \and branch lengths \and displayed trees \and computational complexity}
\end{abstract}

\section{Introduction}

The last few years have witnessed a growing appreciation of reticulate 
evolution -- that is, cases where the history of a set of taxa (e.g., species, 
populations or genomes) cannot be accurately represented as a phylogenetic 
tree \cite{Doolittle1999,bapteste2013networks}, because of events causing
inheritance from more than one ancestor.
Classic examples of such reticulate events 
are hybrid speciation \cite{mallet2007hybrid,nolte2010understanding,abbott2013hybridization},
horizontal gene transfer \cite{Boto2010,hotopp2011horizontal,zhaxybayeva2011lateral} 
and recombination \cite{posada2002recombination,vuilleumier2015contribution}.
Inferring the occurrence of these events in the past is a crucial step towards 
tackling major biological issues, for example to understand recombinant aspects 
of viruses such as HIV \cite{rambaut2004causes}, or characterizing the mosaic 
structure of plant genomes.


Reticulate evolution is naturally
represented by \emph{phylogenetic networks} -- mathematically, simple generalizations
of phylogenetic trees, where some nodes are allowed to have multiple direct 
ancestors \cite{book,morrison_book}.
Currently, much of the mathematical and computational literature on this subject
focuses solely on the topology of phylogenetic networks \cite{HusonScornavacca11},
namely not taking into account branch length information.
This information -- a measure of elapsed time, or 
of change that a species or gene has undergone along a branch --
is usually estimated when inferring phylogenetic trees, and it
may have a big impact on the study of reticulate evolution as well.


For example, in the literature investigating hybridization in the presence of 
incomplete lineage sorting, the branch lengths of a phylogenetic network are 
the key parameters to calculate the probability of observing a gene tree, 
and thus to determine the likelihood of the network 
\cite{meng2009detecting,yu2012probability}.
Moreover, accurate estimates of branch lengths in the gene trees are known to 
improve the accuracy of the inferred network \cite{Kubatko2009,YDLN2014}.
Similarly, for another large class of methods for network reconstruction,
otherwise indistinguishable network scenarios can become distinguishable,
if branch lengths are taken into account \cite{pardi2015reconstructible}.
\change{The precise meaning of branch lengths is often context-dependent,
ranging from expected number of substitutions per site, generally adopted
in molecular phylogenetics, to a measure of the probability of coalescence,
often adopted for smaller timescales where incomplete lineage sorting is common, 
to the amount of time elapsed. In the last case, we may expect the 
phylogeny (network or tree) to be ultrametric, that is to have all its leaves 
at the same distance from the root \cite{chan2006reconstructing,bordewich2016algorithm}.}


In this paper, we explore the impact of branch lengths on a fundamental 
question about phylogenetic networks: the \emph{tree containment} problem.
Informally (formal definitions will be given in the next section), this
problem involves determining whether a given phylogenetic tree is 
contained, or \emph{displayed}, by a given phylogenetic network, 
and in the positive case, locating this tree within the network. 
Biologically, this means understanding whether a gene
-- whose phylogenetic history is well-known -- is consistent 
with a given phylogenetic network, and understanding from which
ancestor the gene was inherited at each reticulate event. 
From a computational perspective, the tree containment problem lies
at the foundation of the reconstruction of phylogenetic networks. 
In its classic version, where only topologies are considered, 
the problem is NP-hard \cite{kanj2008seeing}, but for some specific
classes of networks it can be solved in polynomial time \cite{ISS2010b}.

Intuitively, an advantage of considering branch lengths 
is that it should 
allow one to locate more precisely a gene history within a network, 
and, more generally, it should give more specific answers 
to the tree containment problem. For example, whereas a tree topology
may be contained in multiple different locations inside a network
\cite{cordue2014phylogenetic}, this will happen much more rarely when
branch lengths are taken into account (see, e.g., $T_1$ in 
Fig.\ \ref{fig:example_display}). 
Similarly, some genes may only be detected to be inconsistent with a 
network when the branch lengths of their phylogenetic trees 
are considered (see, e.g.,\ $T_2$ in Fig.\ \ref{fig:example_display}).
%
In practice, some uncertainty in the branch length estimates is to be
expected, which implies that deciding whether a tree is contained
in a network will depend on the confidence in these estimates
(e.g., $T_2$ in Fig.\ \ref{fig:example_display} is only displayed by $N$
if we allow its branch lengths to deviate by 2 or more units from their 
specified values).



While the possibility of having more meaningful answers to a computational 
problem is certainly an important advantage, another factor to consider
is the complexity of calculating its solutions. 
It is known that adding constraints on branch lengths can lead to polynomial 
tractability of other problems in phylogenetics that would otherwise be NP-complete \cite{DSGSRB2010}.
In this paper, we will show a number of results on the effect of taking into account branch 
lengths on the computational complexity of the tree containment problem.
We first introduce the necessary mathematical preliminaries 
(Sec.\ \ref{sec:prel}), including a formal definition of the main problem that 
we consider (\textsc{Tree Containment with Branch Lengths -- TCBL}), and of some
variations of this problem accounting for the fact that branch lengths are 
usually only imprecise estimates of their true values (\textsc{relaxed-TCBL} 
and \textsc{closest-TCBL}).
We then show a number of hardness (negative) results 
for the most general versions of these problems (Sec.\ \ref{sec:neg}), followed 
by a number of positive results (Sec.\ \ref{sec:pos}). 
Specifically, a suite of polynomial-time, pseudo-polynomial time and fixed 
parameter tractable algorithms that solve the problems above
for networks of limited complexity (measured by their level \cite{choy2005computing,jansson2006inferring}; definition below)
and containing no unnecessary complexity (no redundant blobs \cite{trinets1};
also defined below).

\begin{figure}
 \begin{center}

\begin{tabular}{ccccc}
$N$ & ~ & $T_1$ & ~ & $T_2$
\\

& ~ & ~ & ~ &
\\

\begin{tikzpicture}
\node[vertex,draw] (rootp) at  (0,3) {};
\node[vertex,draw] (root) at (0,2.5) {};
\node[vertex,draw] (abc) at  (-.5,2) {};
\node[vertex,draw] (bcd) at  (.5, 2) {};
\node[vertex,draw] (hbc) at  (0,1.5) {};

\node[vertex,draw] (bc) at ( 0,  1) {};
\node[vertex,draw] (ab) at (-1.5,1) {};
\node[vertex,draw] (cd) at ( 1.5,1) {}; 

\node[vertex,draw] (hb) at (-.75,.5) {};
\node[vertex,draw] (hc) at ( .75,.5) {};

\node[vertex,draw] (a) at (-2,0) [label=below:$a$] {};
\node[vertex,draw] (b) at (-.75,0) [label=below:$b$] {};
\node[vertex,draw] (c) at ( .75,0) [label=below:$c$] {};
\node[vertex,draw] (d) at ( 2,0) [label=below:$d$] {};

\draw [very thick] (rootp) -- (root);
\draw [very thick] (root) -- (abc);
\draw [very thick] (root) -- (bcd);
\draw              (abc) -- (hbc);
\draw [very thick] (bcd) -- (hbc);
\draw [very thick] (abc) -- (ab) node [midway, fill=none, xshift=-.2cm, yshift=.2cm] {$2$};
\draw [very thick] (bcd) -- (cd) node [midway, fill=none, xshift=.2cm, yshift=.2cm] {$2$};
\draw [very thick] (hbc) -- (bc);
\draw              (bc) -- (hb);
\draw [very thick] (bc) -- (hc);
\draw [very thick] (ab) -- (a) node [midway, fill=none, xshift=-.25cm, yshift=.1cm] {$2$};
\draw [very thick] (ab) -- (hb);
\draw              (cd) -- (hc);
\draw [very thick] (hb) -- (b);
\draw [very thick] (hc) -- (c);
\draw [very thick] (cd) -- (d) node [midway, fill=none, xshift=.25cm, yshift=.1cm] {$2$};

\end{tikzpicture}
& ~ &
\begin{tikzpicture}
\node[vertex,draw] (rootp) at  (0,2.1) {}; 
\node[vertex,draw] (root) at (0,1.7) {};
\node[vertex,draw] (bcd) at  (.7, 1.33) {};
\node[vertex,draw] (ab) at (-.915,.67) {};

\node[vertex,draw] (a) at (-1.33,0) [label=below:$a$] {};
\node[vertex,draw] (b) at (-.5,0) [label=below:$b$] {};
\node[vertex,draw] (c) at ( .5,0) [label=below:$c$] {};
\node[vertex,draw] (d) at ( 1.33,0) [label=below:$d$] {};

\draw (rootp) -- (root);
\draw (root) -- (ab) node [midway, fill=none, xshift=-.2cm, yshift=.2cm] {$3$};
\draw (root) -- (bcd);
\draw (ab) -- (a) node [midway, fill=none, xshift=-.2cm, yshift=.1cm] {$2$};
\draw (ab) -- (b) node [midway, fill=none, xshift=.2cm, yshift=.1cm] {$2$};
\draw (bcd) -- (c) node [midway, fill=none, xshift=-.2cm, yshift=0cm] {$4$};
\draw (bcd) -- (d) node [midway, fill=none, xshift=.2cm, yshift=.1cm] {$4$};
\end{tikzpicture}
& ~ &
\begin{tikzpicture}
\node[vertex,draw] (rootp) at  (0,2.1) {}; 
\node[vertex,draw] (root) at (0,1.7) {};
\node[vertex,draw] (bc) at  (-.5, .4) {};
\node[vertex,draw] (abc) at (-.915,.67) {};

\node[vertex,draw] (a) at (-1.33,0) [label=below:$a$] {};
\node[vertex,draw] (b) at (-.66,0) [label=below:$b$] {};
\node[vertex,draw] (c) at (-.2,0) [label=below:$c$] {};
\node[vertex,draw] (d) at (.66,0) [label=below:$d$] {};

\draw (rootp) -- (root);
\draw (root) -- (abc) node [midway, fill=none, xshift=-.2cm, yshift=.2cm] {$3$};
\draw (root) -- (d) node [midway, fill=none, xshift=.2cm, yshift=.1cm] {$5$};
\draw (abc) -- (bc);
\draw (bc) -- (b);
\draw (bc) -- (c);
\draw (abc) -- (a) node [midway, fill=none, xshift=-.2cm, yshift=0.1cm] {$2$};
\end{tikzpicture}

\end{tabular}

\end{center}
\caption{\textbf{Toy example on the impact of branch lengths on locating a tree within a network.}
If lengths are not taken into account, both $T_1$ and $T_2$ are displayed by
$N$. Moreover, locating uniquely $T_1$ within $N$ is not possible:
there are 4 switchings \change{(formally defined in the Preliminaries)} of $N$ for $T_1$,
and 3 different ways to locate (i.e.\emph{ images} of) $T_1$ within $N$.
If instead lengths are taken into account, the only image of $T_1$ within $N$ is the one highlighted in 
bold, and $T_2$ is not displayed by $N$ (in fact, the tree displayed by $N$ \change{isomorphic to} 
$T_2$ has significantly different branch lengths).
\emph{Note}: branches with no label are assumed to have length 1.}
\label{fig:example_display}
\end{figure}
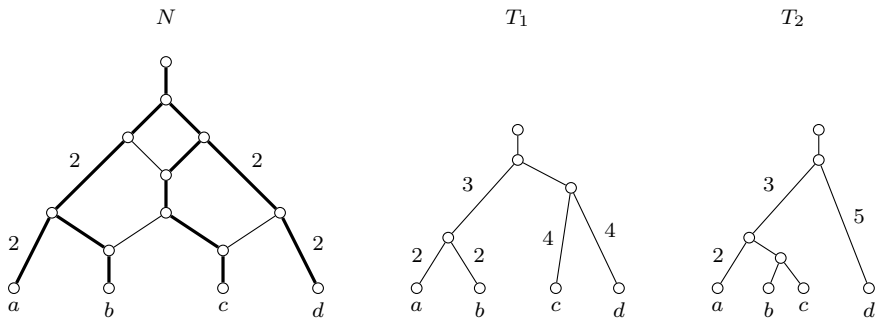


\section{Preliminaries} \label{sec:prel}

We define a \emph{phylogenetic network} on $X$ as a rooted directed
acyclic graph with exactly one vertex of indegree 0 (the \emph{root}), 
with no vertices with indegree and outdegree 1, and whose outdegree 0 vertices 
(the \emph{leaves}) are bijectively labeled by the elements of $X$ (the \emph{taxa}).
A \emph{phylogenetic tree} is a phylogenetic network whose underlying undirected graph has no cycles.
We consider phylogenetic networks
(and thus trees) where each arc has an associated length. Formally,
given an arc $(u,v)$ of a phylogenetic network $N$,
its \emph{length} $\lambda_N(u,v)$ is a positive integer, \change{i.e. strictly greater than zero}.
In this paper we will use the terms ``arc lengths'' and  ``branch lengths'' interchangeably.

A phylogenetic tree or network is \emph{binary} if
all vertices have indegree 1 and outdegree 2 (\emph{bifurcations}),
indegree 2 and outdegree 1 (\emph{reticulations}),
indegree 0 and outdegree 1 (root)
or indegree 1 and outdegree 0 (leaves).
For example, all networks and trees in Fig.\ \ref{fig:example_display} are binary.

A \emph{biconnected component} is a maximal connected subgraph that 
remains connected after removal of any one vertex.
A \emph{blob} of a phylogenetic network $N$ is a biconnected component in which the undirected graph
underpinning the biconnected component contains at least one cycle. 
Note that if a biconnected component of $N$ is not a blob, then it is simply a cut arc
(i.e., an arc whose removal disconnects $N$). 
The \emph{level} of a binary phylogenetic network $N$ is the maximum number of 
reticulations in any blob of $N$.
An \emph{outgoing arc} of a blob $B$ is an arc $(u,v)$ such that $u$ is in $B$ but $v$ is not. 
An \emph{incoming arc} $(u,v)$ of $B$ is such that $v$ is in $B$ but $u$ is not.
Note that a blob has at most one incoming arc.
A blob is \emph{redundant} if it has fewer than two outgoing arcs
(i.e., one outgoing arc if the network is binary).
As an example of these notions, the network $N$ in Fig.~\ref{fig:example_display}
contains only one blob, which has 4 outgoing arcs and is thus non-redundant.
Because this blob has 3 reticulations, $N$ is level-3.

\change{Two phylogenetic trees $T_1$ and $T_2$ are said to be \emph{isomorphic} or to have 
the same \emph{topology}, if there exists a one-to-one mapping from the nodes of $T_1$ 
onto the nodes of $T_2$ preserving leaf labels and descendancy (but not arc lengths).}

Given a phylogenetic tree $T$ and a phylogenetic network $N$ 
whose leaves are labeled bijectively by the same set $X$,
we say that \emph{$T$ is displayed by $N$
taking into account lengths},
if $T$ can be obtained from $N$ in the following way:
\begin{itemize}

\item[$\bullet$] for each reticulation,
remove all incoming arcs except one;
the tree obtained after this process 
is called a \emph{switching \label{def:switchings}} of $N$; 
\item[$\bullet$] repeat as long as possible the following \emph{dummy leaf deletions}:
for each leaf not labeled by an element of $X$, delete it;
\item[$\bullet$] repeat as long as possible the following \emph{vertex smoothings}:
for each vertex $v$ with exactly one parent $p$ and one child $c$,
replace it with an arc from $p$ to $c$, with $\lambda_N(p,c)=\lambda_N(p,v)+\lambda_N(v,c)$.

\end{itemize}
In the following, we sometimes only say that \emph{$T$ is displayed by $N$}
(with no mention of lengths) 
to mean that arc lengths are disregarded, 
and only topological information is taken into account.

Note that 
$N$ displays $T$ taking into account lengths if and only if 
there exists a subtree $T'$ of $N$
with the same root as $N$ such that 
$T$ can be obtained by repeatedly applying vertex smoothings to $T'$. 
In this case $T'$ is said to be the \emph{image} of $T$. 
There is a natural injection from the vertices of $T$
to the vertices of $T'$, so the definition of image extends naturally to 
any subgraph of $T$. In particular, the image of any arc in $T$ is a path in $N$.
Note that $T$ can potentially have many images in $N$, but 
for a switching $S$ of $N$, the image of $T$ within $S$, if it exists, is unique.
As an example of these notions, consider again Fig.~\ref{fig:example_display},
where $N$ displays both $T_1$ and $T_2$, but only $T_1$ if lengths are taken into account.
The part of $N$ in bold is both a switching and an image of $T_1$
(as no dummy leaf deletions are necessary in this case).

Finally, is worth noting that, in this paper, if $N$ displays $T$ taking into account lengths,
then the image of the root of $T$ will always coincide with the root of $N$
(no removal of vertices with indegree 0 and outdegree 1 is applied to obtain $T$).
The biological justification for this is that 
trees and networks are normally rooted using an outgroup, which is sometimes omitted from the phylogeny; 
if arc lengths are taken into account, then the length of the path to the root of $N$ 
in a tree displayed by $N$ conveys the information regarding the distance from the outgroup. 
(See also \cite{pardi2015reconstructible} for a full discussion about this point.)

In this paper, we consider the following problem:

\begin{framed} 
\begin{problem}
\textsc{Tree Containment with Branch Lengths} (TCBL)
\begin{description}
\item[Input:] A phylogenetic network $N$
and a phylogenetic tree $T$ on the same set $X$, and
both with \change{positive} integer arc lengths.
\item[Output:] YES if $T$ is displayed by $N$
taking into account lengths, NO otherwise.
\end{description}
\end{problem}
\end{framed} 

We also consider two variations of TCBL 
seeking trees displayed by $N$ that are allowed to somehow deviate from the query tree,
to account for uncertainty in the branch lengths of the input tree.
The first of these two problems aims to determine the existence of a tree 
displayed by $N$, whose branch lengths fall within a specified (confidence) interval. 

\begin{framed} 
\begin{problem}
\textsc{relaxed-TCBL}
\begin{description}
\item[Input:] A phylogenetic network $N$ with  \change{positive}  integer arc lengths,
and a phylogenetic tree $T$, whose arcs are labelled by two  \change{positive}  integers $m_T(a)$ and $M_T(a)$, representing respectively the minimum and the maximum arc length. Both $N$ and $T$ are on the same set $X$.
\item[Output:] YES if and only if there exists a tree $\widetilde{T}$ displayed by  $N$, \change{isomorphic to} 
$T$, and such that, for each arc $a$ of $T$:
\[
  \lambda_{\widetilde{T}}(\tilde{a}) \in \left[ m_T(a) , M_T(a) \right],
\]
where $\tilde{a}$ denotes the arc in $\widetilde{T}$ that corresponds to $a$ in $T$.
\end{description}
\end{problem}
\end{framed}

The second variation of TCBL we consider here, seeks
-- among all trees displayed by the network, \change{and that are isomorphic to} 
the input tree $T$ 
-- one that is closest to $T$, in terms of the maximum difference between branch 
lengths.
\change{There are several other alternative choices for defining the ``closest'' tree to $T$,
for example if distance is measured in terms of the average difference between branch lengths.
Later on, we will see that our results on this problem also apply to many of these alternative formulations
(see Theorem \ref{thm:reformulations}).}

\begin{framed} 
\begin{problem} \textsc{closest-TCBL}
\begin{description}
\item[Input:] A phylogenetic network $N$ and a phylogenetic tree $T$ on the same set $X$, and
both with  \change{positive}  integer arc lengths.
\item[Output:] A tree $\widetilde{T}$ displayed by $N$, \change{isomorphic to} 
$T$, that minimizes
\[
  \max \left| \lambda_T(a) - \lambda_{\widetilde{T}}(\tilde{a}) \right|,
\]
where the $\max$ is over any choice of an arc $a$ in $T$, and  
$\tilde{a}$ denotes the arc in $\widetilde{T}$ that corresponds to $a$ in $T$.
If no tree  \change{isomorphic to} 
$T$ is displayed by  $N$, then report FAIL.
\end{description}
\end{problem}
\end{framed}

\change{Note that all problems in this paper involving
positive integer arc lengths are equivalent to
problems where arc lengths are 
positive rational numbers: it 
suffices to multiply those rational numbers
by the least common denominator of the fractions
corresponding to these numbers in order to get
integers.}

We conclude with some definitions concerning computational complexity. An NP-complete decision problem
that includes numbers in the input may or may not permit a \emph{pseudo-polynomial time} algorithm. This is
an algorithm which runs in polynomial time if the numbers in the input are encoded in unary, rather than binary. 
Formally speaking such algorithms are not polynomial time, since unary encodings artificially inflate
the size of the input. Nevertheless, a pseudo-polynomial time algorithm has the potential to run quickly if the
numbers in the input are not too large. An NP-complete problem with numbers in the input is said to be \emph{strongly} 
NP-complete if it remains NP-complete even under unary encodings of the numbers. Informally, such problems remain
intractable even if the numbers in the input are small. An NP-complete problem is \emph{weakly} NP-complete if
it is NP-complete when the numbers are encoded in binary. Summarizing, if one shows that a weakly NP-complete problem also
permits a pseudo-polynomial time algorithm, then (under standard complexity assumptions) this excludes strong NP-completeness.
Similarly, demonstrating strong NP-completeness excludes (under standard complexity assumptions) the existence of a 
pseudo-polynomial time algorithm. We refer to Garey and Johnson \cite{GareyJohnson1979} for 
formal definitions.

On a slightly different note, an algorithm is said to be \emph{fixed parameter tractable} (FPT) if it runs 
in time $O( f(k) \cdot poly(n) )$ where $n$ is the size of the input, $k$ is some parameter of the input (in this article: the
level of the network) and $f$ is some computable function that depends only on $k$. An FPT algorithm for an NP-complete
problem has the potential to run quickly even when $n$ is large, as long as the parameter $k$ is small, 
\change{for example when $f$ is a $c^k$ function, where
$c$ is a small constant greater than 1.}
We refer to \cite{downey2013fundamentals,Gramm2008} for more background on FPT algorithms.

\section{Negative results\label{sec:neg}}

\subsection{Strong NP-completeness\label{sec:strongnp}}

\begin{theorem}\label{TCBLstrong}
TCBL is strongly NP-complete, even when the phylogenetic tree $T$ and
the phylogenetic network $N$ are binary.
\end{theorem}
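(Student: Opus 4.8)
The plan is to establish membership in NP and then give a polynomial reduction from a problem that is NP-complete but involves no large numbers, namely 3-SAT; since all arc lengths appearing in the constructed instance will be bounded by a polynomial in the input size, this automatically upgrades the hardness to \emph{strong} NP-completeness. For membership, note that a switching $S$ of $N$ is a certificate of polynomial size: it records, for each reticulation, which incoming arc is retained. Given $S$ one can carry out the dummy leaf deletions and vertex smoothings in polynomial time, summing arc lengths along each smoothed path, and then check in polynomial time whether the resulting tree equals $T$ \emph{including every branch length}. Hence TCBL $\in$ NP regardless of whether the numbers are encoded in unary or binary.

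For hardness I would reduce from 3-SAT, with variables $x_1,\dots,x_n$ and clauses $C_1,\dots,C_m$. The key design choice is to use exactly one reticulation $r_i$ per variable $x_i$, with its two incoming arcs coming from a ``true'' parent and a ``false'' parent of distinct small lengths; in any switching the retained arc records the truth value of $x_i$. Because there is a single reticulation per variable, this value is read \emph{consistently} wherever $x_i$ occurs, so no separate copying or consistency gadget is needed. Each clause $C_j$ is then equipped with a dedicated leaf $\ell_j$ and its own \emph{witness-selection} reticulation whose three incoming arcs correspond to the three literals of $C_j$; choosing one of them in the switching amounts to nominating a literal as the witness satisfying $C_j$.

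The witness device is what reconciles 3-SAT with the \emph{exact} length matching demanded by TCBL. A satisfied clause may have one, two or three true literals, but exact matching accepts only a single target length; by making the incoming arc of each literal originate in the region of $N$ that survives the switching precisely when that literal is true, a clause can nominate only a currently-true literal, and the length to $\ell_j$ can equal its prescribed value in $T$ if and only if \emph{at least one} literal of $C_j$ is true. For correctness I would then argue that switchings of $N$ correspond to pairs consisting of a truth assignment together with a per-clause witness nomination; that the leaf-labelled topology of the displayed tree is identical to that of $T$ for \emph{every} switching, so that topology never rules anything out and only the branch-length constraints do the work; and that the branch lengths match $T$ exactly if and only if each clause's nominated witness is true, i.e. if and only if $\phi$ is satisfiable. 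Finally I would verify by standard local splitting that all gadgets can be realised with reticulations of indegree/outdegree $2/1$ and internal vertices of $1/2$, so that $N$ and $T$ are binary, and that all lengths are $O(\mathrm{poly})$, as required by \cite{GareyJohnson1979} for strong NP-completeness.

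The main obstacle I anticipate is the \emph{routing and layering} of the network: each clause's witness reticulation must be able to reach the appropriate true- or false-labelled parent of each of its three variable reticulations while simultaneously (i) keeping $N$ acyclic, (ii) guaranteeing that the leaf-labelled topology of the displayed tree is independent of the switching, and (iii) keeping every arc length polynomially small. The delicate part is the calibration of the lengths so that the distance to $\ell_j$ cleanly separates the ``witness true'' case from the ``witness false'' case and so that no accidental coincidence of summed path lengths ever lets an unsatisfied clause slip through.
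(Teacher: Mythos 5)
Your NP-membership argument is fine and matches the paper's (a switching is the certificate; deletions and smoothings are then checked in polynomial time). The reduction, however, has a genuine gap, and it is not the ``delicate calibration'' you flag at the end but the central mechanism itself. You want the incoming arc of the clause reticulation corresponding to literal $x_i$ to ``originate in the region of $N$ that survives the switching precisely when that literal is true''. No such region can exist. In a switching one removes, at each reticulation, all incoming arcs but one; a trivial induction along a topological order shows that \emph{every} vertex of $N$ still has a directed path from the root in \emph{every} switching. Vertices disappear from the displayed tree only through dummy-leaf deletion, i.e., when no labelled leaf remains below them -- and that is governed by the switching choices at reticulations \emph{below} them (in your design, by the clause reticulations themselves), not by the choice made at the variable reticulation $r_i$. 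Consequently the choice at $r_i$ has exactly two effects on the displayed tree: a \emph{uniform} length shift applied to every leaf lying below $r_i$, and a change of the attachment point of the subtree rooted at $r_i$ (it hangs below $t_i$ versus below $f_i$). Neither yields the selective, per-clause reading you need. With the length shift, every clause leaf placed below $r_i$ is shifted whether or not its clause contains $x_i$, so fixed length targets in $T$ cannot be stated independently of the (unknown) assignment; with the attachment point, the topology of the displayed tree varies with the switching, contradicting your own requirement that topology be switching-independent. Worse still, the clause reticulation's nomination is an \emph{independent} switching choice: nothing in the construction ties it to the choice at $r_i$, so a switching can ``nominate'' a currently-false literal with impunity. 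Enforcing that tie is exactly what per-occurrence copies plus consistency gadgets are for -- the machinery you explicitly discarded. (By contrast, the indegree-3 clause reticulation is a minor, fixable issue.)

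This is also why the paper does not reduce from 3-SAT but from \textsc{3-Partition}, which is strongly NP-complete and whose structure matches what switchings naturally provide: uniform additive accumulation along series-connected blobs. In the paper's construction, $m$ arcs of $T$ of identical length $L$ must be matched by $m$ paths of the network, each traversing the same $3m$ blocks in series; block $B_k$ is built so that at most one of the $m$ paths can pick up an extra detour of length $n_k$, and exact length matching then forces every $n_k$ to be picked up and the multiset $\{n_k\}$ to be partitioned into $m$ triples each summing to $\Sigma$. No variable/clause consistency is needed -- the numbers do all the work, and the ``contamination'' of all paths by all blocks is a feature rather than a bug. Your general strategy (reduce from a problem with polynomially bounded numbers to get strong NP-completeness) is sound in principle, but to salvage a SAT-style route you would need one reticulation per \emph{occurrence} together with explicit consistency gadgets, and you would still have to resolve the topology-invariance problem; as it stands, the proposed construction cannot distinguish satisfiable from unsatisfiable formulas.
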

\begin{proof}
We reduce to TCBL the following \textsc{3-Partition} problem,
which is strongly NP-complete~\cite{GareyJohnson1975}:
\begin{description}
\item[Input:] an integer $\Sigma$ and
a multiset $S$ of $3m$ positive integers $n_i$ in $]\Sigma/4,\Sigma/2[$
such that $m\Sigma = \sum\limits_{i\in [1..3m]} n_i$.
\item[Output:] YES if $S$ can be partitioned
into $m$ subsets of elements $S_1, S_2, \ldots, S_m$
each of size 3,
such that the sums of the numbers in each subset are all equal;
NO otherwise.
\end{description}

Let us consider a multiset $S$ containing $3m$ positive integers $n_i$
which have sum $m\Sigma$.

We build a phylogenetic tree $T$ in the following way.
We first build a directed path 
containing $m+2$ vertices,
whose arcs all have length 1.
We call its initial vertex $\rho$, its final vertex $b_0$, and
the ancestors of $b_0$, from the parent of $b_0$ to the child of $\rho$
are called $v_1$ to $v_m$.
Then, to each of the $m$ vertices $v_i$ for $i\in [1..m]$
on this directed path, from bottom to top,
we add an arc of length $L=\Sigma+6m^2-3m+1$ to a child,
called $b_i$.

We now build a phylogenetic network $N$ in the following way.
We start by creating a copy of $T$ but for
each $i\in[1..m]$ we remove the arc $(v_i,b_i)$
and replace it by an arc of length 1 from $v_i$ to a new vertex
$r_1^i$ (see Figure~\ref{fig:TCBLglobal}).
Then we create $3m$ subnetworks called $B_k$, for $k \in [1..3m]$,
as described in Figure~\ref{fig:TCBLblock}.
For ease of notation, we consider that vertex $p^2_k$
is also labeled $p^1_k$ and $c^2_k$ is also labeled
$c^1_k$ for any $k \in [1..3m]$.
Finally, we add arcs $(b_k^i,r_{k+1}^i)$ of length 1
for each $k \in [1..3m-1]$ and $i \in [1..m]$
(to connect each $B_k$ with $B_{k+1}$) and
arcs of length 1 from $b_{3m}^i$ to $b_i$ for each $i \in [1..m]$
to obtain $N$.
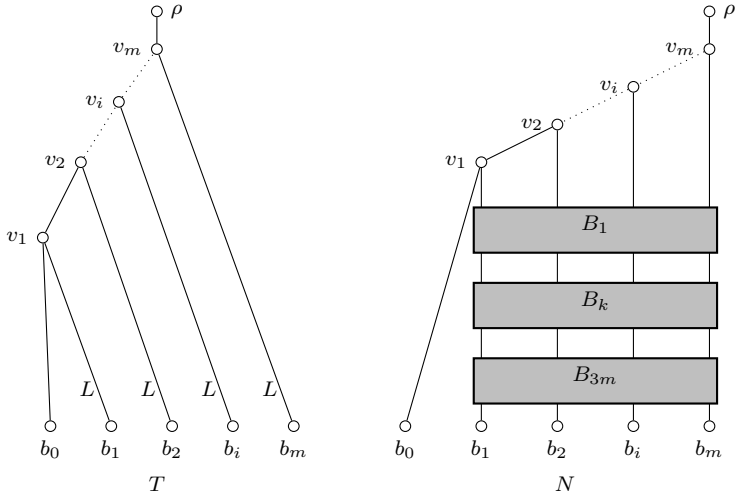
\begin{figure}[!ht]
\centering
\begin{tabular}{ccc}
\begin{tikzpicture}
\node[vertex,draw] (rootparent) at (3,5.5) [label=right:$\rho$] {};
\node[vertex,draw] (root) at (3,5) [label=left:$v_m$] {};
\node[vertex,draw] (v3) at (2.5,4.3) [label=left:$v_i$] {};
\node[vertex,draw] (v2) at (2,3.5) [label=left:$v_2$] {};
\node[vertex,draw] (v1) at (1.5,2.5) [label=left:$v_1$] {};
\node[vertex,draw] (b0) at (1.6,0) [label=below:$b_{0}$] {};
\node[vertex,draw] (b1) at (2.4,0) [label=below:$b_1$] {};
\node[vertex,draw] (b2) at (3.2,0) [label=below:$b_2$] {};
\node[vertex,draw] (b3) at (4,0) [label=below:$b_i$] {};
\node[vertex,draw] (b4) at (4.8,0) [label=below:$b_m$] {};

\node (vb1) at (2.4,0.5) [label=left:$L$] {};
\node (vb2) at (3.2,0.5) [label=left:$L$] {};
\node (vbi) at (4,0.5) [label=left:$L$] {};
\node (vbm) at (4.8,0.5) [label=left:$L$] {};

\draw (rootparent) -- (root);
\draw (v1) -- (v2);
\draw [dotted] (v2) -- (v3);
\draw [dotted] (v3) -- (root);
\draw (b0) -- (v1);
\draw (b1) -- (v1);
\draw (b2) -- (v2);
\draw (b3) -- (v3);
\draw (b4) -- (root);
\end{tikzpicture}
& ~ &
\begin{tikzpicture}
\node[vertex,draw] (rootparent) at (6,5.5) [label=right:$\rho$] {};
\node[vertex,draw] (root) at (6,5) [label=left:$v_{m}$] {};
\node[vertex,draw] (v3) at (5,4.5) [label=left:$v_{i}$] {};
\node[vertex,draw] (v2) at (4,4) [label=left:$v_{2}$] {};
\node[vertex,draw] (v1) at (3,3.5) [label=left:$v_{1}$] {};
\node[vertex,draw] (b0) at (2,0) [label=below:$b_0$] {};


\node[vertex,draw] (c1) at (3,0) [label=below:$b_1$] {};
\node[vertex,draw] (c2) at (4,0) [label=below:$b_2$] {};
\node[vertex,draw] (c3) at (5,0) [label=below:$b_i$] {};
\node[vertex,draw] (c4) at (6,0) [label=below:$b_m$] {};

\draw (rootparent) -- (root);
\draw [dotted] (root) -- (v3);
\draw [dotted] (v3) -- (v2);
\draw (v2) -- (v1);
\draw (v1) -- (b0);

\draw (root) -- (c4);
\draw (v3) -- (c3);
\draw (v2) -- (c2);
\draw (v1) -- (c1);

\draw [thick,fill=lightgray] (2.9,2.9) rectangle (6.1,2.3);
\draw [thick,fill=lightgray] (2.9,1.9) rectangle (6.1,1.3);
\draw [thick,fill=lightgray] (2.9,0.9) rectangle (6.1,0.3);
\node at (4.5,3) [label=below:$B_{1}$] {};
\node at (4.5,2) [label=below:$B_{k}$] {};
\node at (4.5,1) [label=below:$B_{3m}$] {};

\end{tikzpicture}
\\
$T$ & ~ & $N$
\end{tabular}

\caption{\textbf{The tree $T$ and the 
network $N$ used in the proof of Theorem~\ref{TCBLstrong}.}
All arcs are directed downwards. The dotted arcs
represent parts of the network which are not shown
in details but which ensure connectivity.
All arcs incident to leaves $b_i$ of $T$,
for $i \in [1..m]$ have length $L=\Sigma+6m^2-3m+1$;
and remaining arcs of $T$ have length 1. 
All arcs of $N$ have length 1, except in the  $3m$ boxes  $B_k$ 
(see Figure~\ref{fig:TCBLblock}(a) for more details on the content
of those $3m$ boxes $B_k$).}
\label{fig:TCBLglobal}
\end{figure}

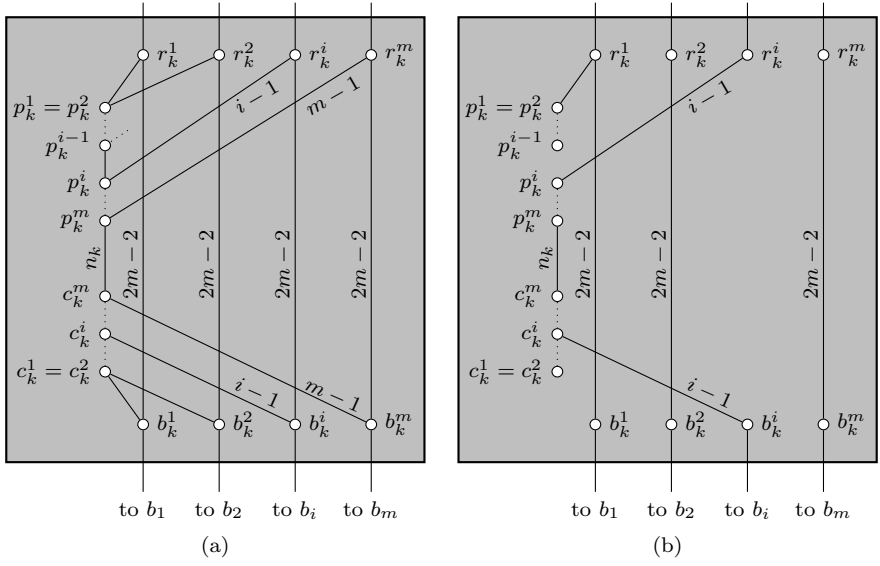
\begin{figure}[!ht]
\centering

\begin{tabular}{cc}
\begin{tikzpicture}
\draw [thick,fill=lightgray] (0.2,6.2) rectangle (5.7,0.3);

\node (root1) at (2,6.5) {};
\node (root2) at (3,6.5) {};
\node (root3) at (4,6.5) {};
\node (root4) at (5,6.5) {};
\node[vertex,draw] (a2) at (1.5,5) [label=left:\footnotesize{$p_k^1=p_k^2$}] {};
\node[vertex,draw] (pim1) at (1.5,4.5) [label=left:\footnotesize{$p_k^{i-1}$}] {};
\node[vertex,draw] (a3) at (1.5,4) [label=left:\footnotesize{$p_k^i$}] {};
\node[vertex,draw] (a4) at (1.5,3.5) [label=left:\footnotesize{$p_k^m$}] {};

\node[vertex,draw] (r1) at (2,5.7) [label=right:\footnotesize{$r_k^1$}] {};
\node[vertex,draw] (r2) at (3,5.7) [label=right:\footnotesize{$r_k^2$}] {};
\node[vertex,draw] (r3) at (4,5.7) [label=right:\footnotesize{$r_k^i$}] {};
\node[vertex,draw] (r4) at (5,5.7) [label=right:\footnotesize{$r_k^m$}] {};

\node (c1) at (2,0) [label=below:to $b_1$] {};
\node (c2) at (3,0) [label=below:to $b_2$] {};
\node (c3) at (4,0) [label=below:to $b_i$] {};
\node (c4) at (5,0) [label=below:to $b_m$] {};
\node (cc1) at (2,-0.2) {};
\node (cc2) at (3,-0.2) {};
\node (cc3) at (4,-0.2) {};
\node (cc4) at (5,-0.2) {};

\node[vertex,draw] (l1) at (2,0.8) [label=right:\footnotesize{$b_k^1$}] {};
\node[vertex,draw] (l2) at (3,0.8) [label=right:\footnotesize{$b_k^2$}] {};
\node[vertex,draw] (l3) at (4,0.8) [label=right:\footnotesize{$b_k^i$}] {};
\node[vertex,draw] (l4) at (5,0.8) [label=right:\footnotesize{$b_k^m$}] {};

\node[vertex,draw] (b2) at (1.5,1.5) [label=left:\footnotesize{$c_k^1=c_k^2$}] {};
\node[vertex,draw] (b3) at (1.5,2) [label=left:\footnotesize{$c_k^i$}] {};
\node[vertex,draw] (b4) at (1.5,2.5) [label=left:\footnotesize{$c_k^m$}] {};

\node(e23) at (1.35,3.4) [label=below:\rotatebox{90}{\footnotesize{$n_k$}}] {};
\node(ext) at (1.85,3.6) [label=below:\rotatebox{90}{\footnotesize{$2m-2$}}] {};
\node(ext) at (2.85,3.6) [label=below:\rotatebox{90}{\footnotesize{$2m-2$}}] {};
\node(ext) at (3.85,3.6) [label=below:\rotatebox{90}{\footnotesize{$2m-2$}}] {};
\node(ext) at (4.85,3.6) [label=below:\rotatebox{90}{\footnotesize{$2m-2$}}] {};
\node[label=below:\rotatebox{30}{\footnotesize{$i-1$}}](ei) at (3.5,5.6) {};
\node[label=below:\rotatebox{30}{\footnotesize{$m-1$}}](ei) at (4.5,5.65) {};
\node[label=below:\rotatebox{-24}{\footnotesize{$i-1$}}](ei) at (3.5,1.65) {};
\node[label=below:\rotatebox{-24}{\footnotesize{$m-1$}}](ei) at (4.5,1.65) {};

\draw (root1) -- (r1);
\draw (root2) -- (r2);
\draw (root3) -- (r3);
\draw (root4) -- (r4);

\draw (r1) -- (a2);
\draw (r2) -- (a2);
\draw (r3) -- (a3);
\draw (r4) -- (a4);
\draw [dotted] (a2) -- (pim1);
\draw [dotted] (1.8,4.7) -- (pim1);
\draw (pim1) -- (a3);
\draw [dotted] (a3) -- (a4);
\draw (b2) -- (l1);
\draw (l1) -- (cc1);
\draw (l2) -- (cc2);
\draw (l3) -- (cc3);
\draw (l4) -- (cc4);
\draw (l2) -- (b2);
\draw (l3) -- (b3);
\draw (l4) -- (b4);
\draw (l1) -- (r1);
\draw (l2) -- (r2);
\draw (l3) -- (r3);
\draw (l4) -- (r4);
\draw (a4) -- (b4);
\draw [dotted] (b4) -- (b3);
\draw [dotted] (b3) -- (b2);

\end{tikzpicture}
& 
\begin{tikzpicture}
\draw [thick,fill=lightgray] (0.2,6.2) rectangle (5.7,0.3);

\node (root1) at (2,6.5) {};
\node (root2) at (3,6.5) {};
\node (root3) at (4,6.5) {};
\node (root4) at (5,6.5) {};
\node[vertex,draw] (a2) at (1.5,5) [label=left:\footnotesize{$p_k^1=p_k^2$}] {};
\node[vertex,draw] (pim1) at (1.5,4.5) [label=left:\footnotesize{$p_k^{i-1}$}] {};
\node[vertex,draw] (a3) at (1.5,4) [label=left:\footnotesize{$p_k^i$}] {};
\node[vertex,draw] (a4) at (1.5,3.5) [label=left:\footnotesize{$p_k^m$}] {};

\node[vertex,draw] (r1) at (2,5.7) [label=right:\footnotesize{$r_k^1$}] {};
\node[vertex,draw] (r2) at (3,5.7) [label=right:\footnotesize{$r_k^2$}] {};
\node[vertex,draw] (r3) at (4,5.7) [label=right:\footnotesize{$r_k^i$}] {};
\node[vertex,draw] (r4) at (5,5.7) [label=right:\footnotesize{$r_k^m$}] {};

\node (c1) at (2,0) [label=below:to $b_1$] {};
\node (c2) at (3,0) [label=below:to $b_2$] {};
\node (c3) at (4,0) [label=below:to $b_i$] {};
\node (c4) at (5,0) [label=below:to $b_m$] {};
\node (cc1) at (2,-0.2) {};
\node (cc2) at (3,-0.2) {};
\node (cc3) at (4,-0.2) {};
\node (cc4) at (5,-0.2) {};

\node[vertex,draw] (l1) at (2,0.8) [label=right:\footnotesize{$b_k^1$}] {};
\node[vertex,draw] (l2) at (3,0.8) [label=right:\footnotesize{$b_k^2$}] {};
\node[vertex,draw] (l3) at (4,0.8) [label=right:\footnotesize{$b_k^i$}] {};
\node[vertex,draw] (l4) at (5,0.8) [label=right:\footnotesize{$b_k^m$}] {};

\node[vertex,draw] (b2) at (1.5,1.5) [label=left:\footnotesize{$c_k^1=c_k^2$}] {};
\node[vertex,draw] (b3) at (1.5,2) [label=left:\footnotesize{$c_k^i$}] {};
\node[vertex,draw] (b4) at (1.5,2.5) [label=left:\footnotesize{$c_k^m$}] {};

\node(e23) at (1.35,3.4) [label=below:\rotatebox{90}{\footnotesize{$n_k$}}] {};
\node(ext) at (1.85,3.6) [label=below:\rotatebox{90}{\footnotesize{$2m-2$}}] {};
\node(ext) at (2.85,3.6) [label=below:\rotatebox{90}{\footnotesize{$2m-2$}}] {};
\node(ext) at (4.85,3.6) [label=below:\rotatebox{90}{\footnotesize{$2m-2$}}] {};
\node[label=below:\rotatebox{30}{\footnotesize{$i-1$}}](ei) at (3.5,5.6) {};
\node[label=below:\rotatebox{-24}{\footnotesize{$i-1$}}](ei) at (3.5,1.65) {};

\draw (r1) -- (a2);

\draw (root1) -- (r1);
\draw (root2) -- (r2);
\draw (root3) -- (r3);
\draw (root4) -- (r4);

\draw (r3) -- (a3);
\draw [dotted] (a2) -- (pim1);
\draw [dotted] (a3) -- (a4);

\draw (l1) -- (cc1);
\draw (l2) -- (cc2);
\draw (l3) -- (cc3);
\draw (l4) -- (cc4);
\draw (l3) -- (b3);
\draw (l1) -- (r1);
\draw (l2) -- (r2);
\draw (l4) -- (r4);
\draw (a4) -- (b4);
\draw [dotted] (b4) -- (b3);
\draw [dotted] (b3) -- (b2);

\end{tikzpicture}
\\
(a) & (b)
\end{tabular}

\caption{\textbf{The content of the box $B_k$ (a)
and a corresponding switching (b) of the network
of Fig.~\ref{fig:TCBLglobal}.}
All arcs are directed downwards. The dotted arcs
represent parts of the network which are not shown
in details but which ensure connectivity.
All arcs have length 1 except
arcs $(r_k^i,b_k^i)$ for $i \in [1..m]$ which have length 
$2m-2$,
arcs $(r^i_k,p^i_k)$ and $(c^i_k,b^i_k)$, for $i>1$,
which have length $i-1$,
and the arc $(p^m_k,c^m_k)$ with length $n_k$.}
\label{fig:TCBLblock}
\end{figure}

Suppose that $S$ can be partitioned
into $m$ subsets of elements $S_1, S_2, \ldots, S_m$
each of size 3, such that the sums of the numbers in each subset
are all equal to $\Sigma$.
We now prove that this implies that $T$ and $N$ constructed
above constitute a positive instance of TCBL.

For each $n_k$, if it belongs to $S_i$
then we remove from $N$ all arcs $(c^{j}_k,b^j_k)$ for $j \in [1..m]-\{i\}$,
as well as all arcs $(r^j_k,p^{j}_k)$ for $j \in [1..m]-\{i\}-\{1 \textrm{ if }i\neq 2\}$,
the arc $(r^i_k,b^i_k)$, and finally the arc $(p^{i-1}_k,p^i_k)$ if $i\notin\{1,2\}$.
This way, we obtain a switching $T'$ of $N$ for $T$, shown in Figure~\ref{fig:TCBLblock}(b).


In $T'$, the only path from $r^i_k$ to $b^i_k$
goes through the arc $(p^m_k,c^m_k)$ of length $n_k$,
so the total length of this path is $2m-2+n_k$.
For all other $S_j$, $j\in [1..k]-\{i\}$, the only directed
path from $r_k^i$ to $b_k^i$ is an arc of length $2m-2$.
Thanks to the arcs $(b^j_k,r^j_{k+1})$, for $j\in [1,m]$
a unique path can be found in $T'$
from $v_j$ to $b_j$. We can check that the lengths
of the arcs of $T$ leading to $b_i$ with $i \in [1..m]$
are consistent with the lengths of these paths:
the latter have all length
$3m((2m-2)+1)+(\sum\limits_{n_k \in S_i} n_k)+1=\Sigma+6m^2-3m+1$.
Furthermore, all other arcs of $T$ (on the path from
$\rho$ to $b_0$) are also present in $T'$ with
the same configuration and length,
meaning that, 
as we wished to prove,
$T$ is displayed by $N$ taking into account lengths.

We now focus on the converse, supposing
that the tree $T$ is displayed by  $N$ taking
into account lengths. We first note
that any switching $T'$ of $N$ for $T$ contains
the vertices $b_0$, $v_i$ for $i \in [1..m]$,
$\rho$ and the arcs between these vertices.
Furthermore, $T'$ also contains a path $P_i(T')$
from $v_i$ to $b_i$, for each $i \in [1..m]$,
of length $L$.

\textbf{Claim 1}: For any switching $T'$ of $N$ for $T$,
for any $i \in [1..m]$ and $k \in[1..3m]$,
$r^i_k \in P_i(T')$ and $b^i_k \in P_i(T')$.

We prove it by induction on $k$.
For $k=1$, for all $i\in [1..m]$, vertex $r^i_1$ has indegree 1
and its unique parent is contained in $P_i(T')$
so it is also contained in $P_i(T')$.
As arc $(p^m_1,c^m_1)$ belongs to all paths between
$p^i_1$ and $c^j_1$ for $i,j \in [1..m]$, at most one 
of the paths $P_i(T')$ contains $(p^m_1,c^m_1)$.
If no such path exists then all paths $P_i(T')$
contain arc $(r^i_1,b^i_1)$, so $b^i_1 \in P_i(T')$.
Otherwise, we denote by $P_{i_0}(T')$ the path containing
$(p^m_1,c^m_1)$. All other paths $P_i(T')$ for $i\in [1..m]-i_0$ contain
arc  $(r^i_1,b^i_1)$, so $b^i_1 \in P_i(T')$.
Because 
none of those paths contain $b^{i_0}_1$,
we must have $b^{i_0}_1 \in P_{i_0}(T')$.
Therefore, for all $i \in [1..m]$, $b^i_1 \in P_{i}(T')$.

Supposing vertices $r^i_{k-1}$ and $b^i_{k-1}$
belong to $P_i(T')$ for all $i \in [1..m]$,
we can reproduce the proof above by replacing ``1'' by 
``$k$'' each time we refer to $b^i_1$, $c^i_1$,
$p^i_1$ and $r^i_1$ for any $i \in [1..m]$,
in order to deduce that $r^i_{k}$ and $b^i_{k}$
belong to $P_i(T')$.

\textbf{Claim 2}: For any switching $T'$ of $N$ for $T$,
for any $k \in [1..3m]$,
one of the paths $P_i(T')$ contains arc $(p_k^m,c_k^m)$.

First, using Claim 1, we can consider each
portion of the path $P_i(T')$ from $r^i_k$ to $b^i_k$ in $T'$,
and note that this portion has length $2m-2+n_k$ 
if $P_i(T')$ contains arc $(p_k^m,c^m_k)$, or length $2m-2$ otherwise.

Therefore, supposing by contradiction that
there exists at least one $k_0 \in  [1..3m]$ such that
none of the paths $P_i(T')$ contain arc $(p_{k_0}^m,c_{k_0}^m)$,
then the cumulative length $L_{k_0}$ of the portions of all paths $P_i(T')$
between $r^i_{k_0}$ and $b^i_{k_0}$, for $i\in [1..m]$,
is $m(2m-2)$.
Therefore, summing the lengths of all these portions and
the ones of arcs $(b^i_k,r^i_{k+1})$ between them as well
as the ones of the arcs  $(v_i,r^i_1)$ and
$(b^i_{3m},b_i)$ for any $i\in [1..m]$,
the sum $L'$ of the lengths of all paths $P_i(T')$
for $i \in [1..m]$ is at most
$m + 3m(L_{k_0}+m) + (\sum_{k \in [1..3m]} n_k)-n_{k_0} =
m(6m^2-3m+1+\Sigma)-n_{k_0} = mL-n_{k_0}$.
However, the sum $L_T$ of the lengths of all arcs $(v_i,b_i)$ of $T$
is equal to $mL$ 
so $L' < L_T$, meaning that 
$T$ is not displayed by $N$
taking into account lengths: contradiction.

\textbf{Claim 3}: for any switching $T'$ of $N$ for $T$,
for any $i \in [1..m]$, there are exactly 3
arcs of the form $(p^m_{k},c^m_{k})$ contained in $P_i(T')$.

We suppose by contradiction that there exists $i \in [1..m]$,
and $k_1, k_2, k_3$ and $k_4 \in [1..3m]$
such that $(p^m_{k_1},c^m_{k_1})$,
$(p^m_{k_2},c^m_{k_2})$, $(p^m_{k_3},c^m_{k_3})$
and $(p^m_{k_4},c^m_{k_4})$ are contained in $P_i(T')$.
Then, this path has length at least
$n_{k_1}+n_{k_2}+n_{k_3}+n_{k_4}+3m(2m-2)+3m+1>\Sigma+3m(2m-1)+1$
because $n_i>\Sigma/4$ for all $i\in [1..3m]$.
So $T'$ contains a path from $v_i$ to $b_i$ which
is strictly longer than the arc from $v_i$ to $b_i$ in $T$,
so $T$ is not displayed by  $T'$, nor in $N$: contradiction.

Now, we suppose by contradiction that there exists $i \in [1..m]$
such that $P_i$ contains at most 2 arcs of the form $(p^m_{k},c^m_{k})$.
Then, according to Claim 2,
each of the the remaining $3m-2$ arcs of the form $(p^m_{k},c^m_{k})$
must be contained by one of the remaining $m-1$ paths $P_j$ for
$j \in [1..m]-\{i\}$. So at least one of those paths must contain
strictly more than 3 such arcs,
which contradicts the previous paragraph: contradiction.

Finally, for any switching $T'$ of $N$ for $T$, 
the fact that $T$ is displayed by  $N$ taking into account lengths,
implies that the length of each arc $(v_i,b_i)$ of $T$, $\Sigma+6m^2-3m+1$,
equals the length of each path $P_i(T')$.
Claim 2 and 3 imply that the arcs of the form $(p_k^m,c_k^m)$
are partitioned into the paths $P_i(T')$, with each $P_i(T')$ containing
exactly 3 such arcs. Denoting by $n_{k_i}$, $n_{k'_i}$ and $n_{k''_i}$
the length of such arcs, we obtain that the length of $P_i(T')$
equals $n_{k_i} + n_{k'_i} + n_{k''_i}+6m^2-3m+1$, therefore
$n_{k_i} + n_{k'_i} + n_{k''_i} = \Sigma$, which implies 
that $S$ can be partitioned into $m$ subsets of elements
$S_i=\{n_{k_i}, n_{k'_i}, n_{k''_i}\}$,
such that the sums of the numbers in each subset $S_i$
are all equal to $\Sigma$.

Finally, it is easy to see that the problem is in NP:
a switching $T'$ of the input network $N$ is
a polynomial size certificate of the fact
that the input tree $T$ is contained in $N$. We can
check in polynomial time that $T$ can be obtained
from $T'$ by applying dummy leaf deletions
and vertex smoothings until possible, 
and checking that the obtained tree is isomorphic with $T$.
\qed
\end{proof}

\change{We note that Theorem \ref{TCBLstrong} can be extended to binary tree-sibling 
\cite{cardona2008distance} time-consistent \cite{baroni2006hybrids} networks, 
by multiplying by 2 all arc lengths of the network constructed in the proof 
(in order to keep integer arc lengths even if those arcs are subdivided, which happens at most once), 
and using a gadget shown in Figure~\ref{fig:hangleaves}, adapted from Fig. 4 of~\cite{ISS2010b} with arcs 
of length 1, and the operations described in the proof of Theorem 3 of the same article.}

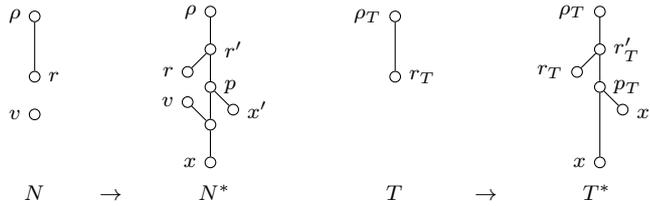
\begin{figure}[h]
 \begin{center}

\begin{tabular}{ccccccc}
\begin{tikzpicture}
\node[vertex,draw] (root) at (0,2) [label=left:$\rho$] {};
\node[vertex,draw] (rootc) at (0,1.2) [label=right:$r$] {};
\node[vertex,draw] (v) at (0,0.7) [label=left:$v$] {};
\node (void) at (0,0) {};

\draw (root) -- (rootc);
\end{tikzpicture}
& ~ &
\begin{tikzpicture}
\node[vertex,draw] (rootp) at (0,2) [label=left:$\rho$] {};
\node[vertex,draw] (root) at (0,1.5) [label=right:$r'$] {};
\node[vertex,draw] (rootc) at (-0.3,1.2) [label=left:$r$] {};
\node[vertex,draw] (p) at (0,1) [label=right:$p$] {};
\node[vertex,draw] (xp) at (0.3,0.7) [label=right:$x'$] {};
\node[vertex,draw] (v) at (-0.3,0.8) [label=left:$v$] {};
\node[vertex,draw] (px) at (0,0.5) {};
\node[vertex,draw] (x) at (0,0) [label=left:$x$] {};

\draw (rootp) -- (root);
\draw (root) -- (rootc);
\draw (root) -- (p);
\draw (p) -- (px);
\draw (p) -- (xp);
\draw (px) -- (x);
\draw (v) -- (px);
\end{tikzpicture}
& ~
&
\begin{tikzpicture}
\node[vertex,draw] (root) at (0,2) [label=left:$\rho_T$] {};
\node[vertex,draw] (rootc) at (0,1.2) [label=right:$r_T$] {};
\node (void) at (0,0) {};

\draw (root) -- (rootc);
\end{tikzpicture}
& ~ &
\begin{tikzpicture}
\node[vertex,draw] (rootp) at (0,2) [label=left:$\rho_T$] {};
\node[vertex,draw] (root) at (0,1.5) [label=right:$r'_T$] {};
\node[vertex,draw] (rootc) at (-0.3,1.2) [label=left:$r_T$] {};
\node[vertex,draw] (p) at (0,1) [label=right:$p_T$] {};
\node[vertex,draw] (xp) at (0.3,0.7) [label=right:$x'$] {};
\node[vertex,draw] (x) at (0,0) [label=left:$x$] {};

\draw (rootp) -- (root);
\draw (root) -- (rootc);
\draw (root) -- (p);
\draw (p) -- (x);
\draw (p) -- (xp);
\end{tikzpicture}
\\
$N$ & $\rightarrow$ & $N^*$ & ~ & $T$ & $\rightarrow$ & $T^*$
\end{tabular}

 \end{center}
\caption{\change{\textbf{How our slightly modified \textsc{HangLeaves}$(v)$
modifies $N$ and $T$.} Vertices $\rho$ and $\rho_T$ are the roots
of $N$ and $T$ respectively. All arcs have length 1, except
$(r',r)$ of $N^*$ which has the same length as $(\rho,r)$
of $N$,
$(r'_T,r_T)$ of $T^*$ which has the same length as $(\rho_T,r_T)$
of $T$ and $(p_T,x)$ which has length 2.}}
\label{fig:hangleaves}
\end{figure}

\begin{corollary}
\textsc{relaxed-TCBL} is strongly NP-complete, and \textsc{closest-TCBL} is strongly NP-hard.
\end{corollary}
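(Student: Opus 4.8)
The plan is to derive both claims directly from Theorem~\ref{TCBLstrong}, by observing that each of the two variants either contains TCBL as a special case (\textsc{relaxed-TCBL}) or can be used as a subroutine to decide TCBL (\textsc{closest-TCBL}). In both cases the reduction leaves the numbers in the instance essentially untouched, so the \emph{strong} form of the hardness is preserved, and the only substantive checks are polynomial-time verifiability (for the NP membership of \textsc{relaxed-TCBL}) and the bookkeeping that guarantees unary-encoding sizes stay polynomially bounded.

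For \textsc{relaxed-TCBL}, I would first establish strong NP-hardness by reducing TCBL to it. Given a TCBL instance $(N,T)$, I build the \textsc{relaxed-TCBL} instance with the same network $N$ and the same tree $T$, setting on every arc $a$ the degenerate interval $m_T(a)=M_T(a)=\lambda_T(a)$. A tree $\widetilde{T}$ displayed by $N$, isomorphic to $T$, with $\lambda_{\widetilde{T}}(\tilde a)\in[\lambda_T(a),\lambda_T(a)]$ for every arc is exactly a tree displayed by $N$ taking into account lengths and equal to $T$; hence the constructed instance is a YES-instance if and only if $(N,T)$ is a YES-instance of TCBL. This transformation is polynomial and reuses only the integers already present in the TCBL instance, so it preserves strong NP-hardness. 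Membership in NP follows as in the proof of Theorem~\ref{TCBLstrong}: a switching $S$ of $N$ is a polynomial-size certificate, from which the displayed tree and its smoothed arc lengths are computed in polynomial time, after which one checks in polynomial time that this tree is isomorphic to $T$ and that every arc length lies in its prescribed interval. Thus \textsc{relaxed-TCBL} is strongly NP-complete.

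For \textsc{closest-TCBL}, which is an optimization problem, I would show strong NP-hardness by a Turing (oracle) reduction from TCBL. Given a TCBL instance $(N,T)$, I call \textsc{closest-TCBL} on $(N,T)$ itself. If it reports FAIL, then no tree isomorphic to $T$ is displayed by $N$, so $(N,T)$ is a NO-instance. Otherwise it returns a displayed tree $\widetilde{T}$ minimizing $\max_a\left|\lambda_T(a)-\lambda_{\widetilde{T}}(\tilde a)\right|$; since this optimal value equals $0$ precisely when some displayed tree isomorphic to $T$ reproduces the arc lengths of $T$ exactly, I answer YES to TCBL if and only if the returned optimum is $0$, which is verified in polynomial time. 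The numbers in the \textsc{closest-TCBL} instance are exactly those of the TCBL instance, so the strong NP-hardness of TCBL carries over to \textsc{closest-TCBL}.

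The only points demanding care are bookkeeping ones rather than genuine obstacles: confirming that neither reduction enlarges the integers in the input (so that a unary encoding of the reduced instance stays polynomially bounded by that of the original, which is what ``strong'' requires), and noting that, because \textsc{closest-TCBL} outputs a tree rather than a yes/no answer, we can only claim NP-hardness, not NP-completeness, for it.
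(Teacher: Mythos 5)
Your proposal is correct and follows essentially the same route as the paper: the paper likewise reduces TCBL to \textsc{relaxed-TCBL} by setting the degenerate intervals $m_T(a)=M_T(a):=\lambda_T(a)$, and to \textsc{closest-TCBL} by checking whether the returned optimum $\max \left| \lambda_T(a) - \lambda_{\widetilde{T}}(\tilde{a}) \right|$ equals $0$. Your version merely spells out the details the paper leaves implicit (NP membership inherited from Theorem~\ref{TCBLstrong}, preservation of strong hardness because the reductions leave the numbers untouched, and the NP-hard versus NP-complete distinction for the optimization problem), all of which are accurate.
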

\begin{proof}
TCBL can be easily reduced to both problems. Indeed, any  instance of TCBL corresponds to an instance of \textsc{relaxed-TCBL} with 
$m_T(a) = M_T(a) := \lambda_T(a)$ for each arc of $T$.
Additionally, TCBL can be reduced to \textsc{closest-TCBL} by checking whether there exists 
a solution $\widetilde{T}$ with $\max |\lambda_T(a) - \lambda_{\widetilde{T}}(\tilde{a})| = 0$.
\qed
\end{proof}

\subsection{Weak NP-completeness for level-2 networks\label{sec:level2Blob}}

The strong NP-completeness result above does not imply anything about the hardness of TCBL on networks of bounded level.
Unfortunately, TCBL is hard even for low-level networks, as we now show.

\begin{theorem}\label{TCBLlevel2}
TCBL is weakly NP-complete for level-2 binary networks.
\end{theorem}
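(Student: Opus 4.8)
The plan is to reduce from the \textsc{Partition} problem -- given a multiset of positive integers $\{a_1,\dots,a_n\}$ with $\sum_k a_k = 2B$, decide whether some subset sums to exactly $B$ -- which is weakly NP-complete \cite{GareyJohnson1979}. Because the hardness of \textsc{Partition} stems entirely from the magnitudes of the $a_k$ (encoded in binary) and not from their number, encoding each $a_k$ as a branch length should yield weak, rather than strong, NP-completeness, and crucially it lets us keep the \emph{structural} complexity of the network bounded. Membership in NP is immediate and follows exactly as in the proof of Theorem~\ref{TCBLstrong}: a switching of $N$ is a polynomial-size certificate, and one checks in polynomial time that smoothing it yields a tree isomorphic to $T$ with matching branch lengths.

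The first step is to build, from a \textsc{Partition} instance, a binary network $N$ that is a \emph{chain} of $n$ small gadgets $G_1,\dots,G_n$ joined by single cut arcs, terminating in a distinguished ``accumulator'' leaf $z$; gadget $G_k$ carries the length $a_k$. Joining the gadgets by cut arcs ensures that each $G_k$ forms its own blob, so the \emph{level} of $N$ equals the maximum number of reticulations inside a single gadget. I would design each $G_k$ to contain exactly two reticulations -- making $N$ level-2 -- together with a side leaf $b_k$, so that the gadget blob has two outgoing arcs and is therefore \emph{non-redundant} (matching the regime of the positive results). The internal lengths of $G_k$ are chosen so that $G_k$ admits exactly two topology-preserving switchings, one ``selecting'' $a_k$ and one ``discarding'' it: in both cases the smoothed gadget is the same small tree (so the global topology of the displayed tree is forced to equal that of $T$ regardless of the choices), the path length to the side leaf $b_k$ is identical in both cases (so $b_k$ imposes no constraint), but the length contributed to the root-to-$z$ path differs by exactly $a_k$. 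The query tree $T$ is then the caterpillar with leaves $b_1,\dots,b_n,z$ whose branch lengths match the fixed gadget outputs and whose root-to-$z$ path has total length equal to the fixed offset plus $B$.

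With this in place, the correspondence is routine in both directions. If a subset $S$ sums to $B$, switching each $G_k$ to ``select'' iff $a_k\in S$ produces a switching whose image (in the sense defined above) is isomorphic to $T$, and whose root-to-$z$ length equals the offset plus $\sum_{a_k\in S}a_k$, i.e.\ the offset plus $B$, matching $T$; hence $N$ displays $T$ taking into account lengths. Conversely, if $N$ displays $T$, then within any switching realising this the image of the root-to-$z$ arc is a path of the prescribed length, which -- since the gadget design forces each $G_k$ into one of its two intended switchings -- equals the offset plus the sum of the selected $a_k$; equating this to the offset plus $B$ shows that the selected elements form a subset summing to $B$.

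I expect the single real difficulty to be the gadget design: producing a \emph{non-redundant}, binary, level-2 blob whose two admissible switchings (i) leave the local topology unchanged, (ii) leave the side-leaf length unchanged, and (iii) shift the through-path length by precisely $a_k$, while simultaneously ruling out the remaining combinations of the two reticulation choices -- one must check that every other switching either deletes a needed leaf, creates a dummy leaf, or otherwise yields a tree not isomorphic to $T$, so that it cannot accidentally display $T$ with a spurious length assignment. Verifying that the resulting chain contains no redundant blobs and is genuinely level-2, and that all arc lengths remain positive integers of polynomially many bits, is then a matter of bookkeeping.
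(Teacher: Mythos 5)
There is a genuine gap, and it is fatal: the gadget whose design you defer to the end cannot exist. You insist on making each gadget blob non-redundant (two reticulations, two outgoing arcs, one leading to a side leaf $b_k$), ``matching the regime of the positive results'' --- but that regime is exactly where the paper proves the problem is \emph{easy}. By Theorem~\ref{thm:levk}, TCBL on binary level-$k$ networks with no redundant blobs is solvable in $O(k \cdot 2^k \cdot n)$ time, i.e., in linear time for level 2. So if your reduction could be completed as described, it would prove P~$=$~NP; no amount of cleverness in the gadget construction can rescue it.

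The structural reason your requirements (i)--(iii) are mutually unsatisfiable is the one driving Theorem~\ref{thm:levk}: since every blob of your chain has two outgoing arcs, and every leaf below each outgoing arc is a taxon, the image of $T$ must branch inside every gadget $G_k$, and that branching point is forced to be the image of the caterpillar vertex $v_k$ (the parent of $b_k$ in $T$). Hence the image of each tree arc $(v_k,v_{k+1})$ is a path spanning only the post-branching part of $G_k$, one cut arc, and the pre-branching part of $G_{k+1}$, and its length is pinned to the fixed input value $\lambda_T(v_k,v_{k+1})$. Your ``select/discard'' choice shifts the through-path length by $a_k>0$, so for each such arc at most one combination of the (at most two) adjacent gadgets' choices is consistent with the exact constraint; the choices couple only locally, chain-fashion, and can never accumulate into a global sum required to equal $B$ --- which is the whole point of a \textsc{Partition}/\textsc{subset sum} reduction. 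The paper's construction succeeds precisely by doing the opposite: its blobs are \emph{redundant} (one outgoing arc each, in series), the query tree has only two leaves $a$ and $b$, and the single tree arc to $b$, of length $s+3k+1$, has as its image a path threading through all $k$ blobs, so the per-blob contributions ($2$ or $2+n_i$) genuinely accumulate. In short, redundancy is not an incidental feature of the paper's proof but the feature that makes hardness possible; if you drop the side leaves and the non-redundancy requirement, your construction essentially becomes the paper's.
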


\begin{proof} First, recall that \textsc{TCBL} is in NP (Theorem \ref{TCBLstrong}).
To prove the theorem, we will reduce from the \textsc{subset sum} problem: 
given a multiset of \change{positive} integers $I=\{n_1, \dots, n_k\}$  and a \change{positive} integer $s$,  
is there a non-empty subset of $I$ whose sum is $s$? 
The {\sc subset sum} problem is known to be weakly NP-complete.

Now, we show how to construct an instance of the TCBL problem with the required characteristics, for each instance of the {\sc subset sum} problem.
This can be done by defining the tree $T$ and the network $N$ as follows. The tree $T$ is defined as the rooted tree on two leaves labeled $a$ and $b$, parent $\rho'$ and root $\rho$, and arcs $(\rho,\rho')$, $(\rho', a)$ and  $(\rho',b)$, 
respectively of length 1, 1 and $s+3k+1$. The network $N$ is the network on the two leaves labeled $a$ and $b$ 
shown in Fig.\ \ref{fig:level2Blob}, where 
$L> s+3k+1$.
Then, it is easy to see that a positive instance of the TCBL problem gives a positive instance of the {\sc subset sum} problem through the previous transformation, and vice versa.
This is true because no switching $S$ of $N$ giving rise to $T$ will ever contain the arcs with length $L$.
Thus, the paths in $S$ going through the blob containing the arc with length $n_i$ can have either length 2 or $2+n_i$. 
Now, any path from $\rho'_N$ to the leaf labeled $b$ has to go through all blobs, and through all arcs connecting these blobs. 
The sum of the lengths of the arcs on this path but outside the blobs 
is $k+1$. 
Thus, there exists a path from 
$\rho'_N$ to $b$ with length $s+3k+1$ if and only if there is a non-empty subset of $I=\{n_1, \dots, n_k\}$ whose sum is $s$.

As to the weakness of this NP-completeness result, we refer to Section \ref{pseudoPoly}, where
we give a pseudo-polynomial algorithm for TCBL on any binary network of bounded level.
\qed 
\end{proof}

\pgfdeclarelayer{background}
\pgfsetlayers{background,main}

\tikzstyle{myline}=[line width=.6pt]

\def\myellipse#1#2{
  \begin{scope}
    \clip (#2) ellipse (1.3 and 1);
    \draw[line width=1.2pt, fill=white] (#2) ellipse (1.3 and 1);

    \coordinate (bottom_left) at ($(#2) - (1.3,1)$);
    \coordinate (top_right) at ($(#2) + (1.3,1)$);
    \draw[myline] (bottom_left) -- (top_right);

    \node at ($(#2)+(-.3,.2)$) {$L$};
    \node at ($(#2)+(-.45,-.7)$) {$2$}; 
  \end{scope}
  \node at ($(#2) + (60:1.3)$) {$1$};
  \node at ($(#2) + (-20:1.5)$) {$1$};
  \node at ($(#2) + (160:1.5)$) {$n_{#1}$};
}

\begin{figure}[htbp]
\begin{center}
  \begin{tikzpicture}
    \node[vertex,draw,label=below:$a$] (start) at (-3,-2) {};
    \node[vertex,draw,label=below:$b$] (end) at (0,-10.5) {};

    \foreach \i/\pos in {1/-2, 2/-5, k/-9} {
      \myellipse{\i}{0,\pos}
      \node[vertex,draw] at (0,\pos-1) {};
      \node[vertex,draw] at (0,\pos+1) {};
      \node[vertex,draw] at (-0.9,\pos-0.7) {};
      \node[vertex,draw] at (0.9,\pos+0.7) {};
    }
    \foreach \pos in {-3.4, -6.2, -7.8}{
      \node at (-.2,\pos) {$1$};
    }
    \foreach \pos in {-.6, -10.3}{
      \node at (.3,\pos) {$1$};
    }
    \draw[myline] (0,0) -- (0,0.5);
    \node[vertex,draw,label=right:$\rho'_N$] at (0,0) {};
    \node[vertex,draw,label=right:$\rho_N$] at (0,0.5) {};

    \node[vertex,draw] at (-3,-2) {};    

    \begin{pgfonlayer}{background}
      \draw[myline] (start) -- node[midway, yshift=6pt] {$1$} (0,0) -- (0,-6.4);
      \draw[myline, dotted] (0,-6.5) -- (0,-7.5);
      \draw[myline] (0,-7.6) -- (end);
    \end{pgfonlayer}
    
    \node at (-0.2,0.25) {1}; 

  \end{tikzpicture}

  \caption{\textbf{The network used in the proof of Theorem \ref{TCBLlevel2}.}
  }
\label{fig:level2Blob}
\end{center}
\end{figure}

\newpage

\section{Positive results\label{sec:pos}}

\subsection{TCBL is FPT in the level of the network when no blob is redundant} \label{sec:2outarcs}



Note that in the weak NP-completeness result from Section \ref{sec:level2Blob} 
the blobs have only one outgoing arc each -- that is, they are redundant.
If we require that every blob has at least two outgoing arcs,
then dynamic programming becomes possible, and the problem becomes much easier. The
high-level reason for this as follows. 
Because blobs in the network $N$ have at least two outgoing arcs,
the image of any tree displayed by $N$ will branch at least once inside each blob.
This means that for each arc $(u', v')$ of $T$, if the image of $u'$ lies inside a blob $B$,
then the image of $v'$ either  lies (i) also inside $B$ or (ii) in one of the biconnected components $C_i$
immediately underneath $B$. This last observation holds with or without arc lengths, but
when taking lengths into account it has an extra significance. Indeed, suppose $N$ displays $T$ 
taking lengths into account, and $S$ is a switching of $N$ that induces the image of $T$ inside $N$. 
Let $(u',v')$ be an arc of $T$. If, within $S$, the image of $u'$
lies inside a blob $B$ and the image of $v'$ lies inside a biconnected component $C_i$ immediately
underneath $B$, then the image of the arc $(u',v')$ -- a path -- is naturally partitioned into 3 parts. 
Namely, a subpath inside $B$ (starting at the image of $u'$), followed by an outgoing arc of $B$,
followed by a subpath inside $C_i$ (terminating at the image of $v'$). See Fig.\ \ref{fig:steven} for an illustration. 
Within $S$, the lengths of these 3 parts must sum to $\lambda_{T}(u',v')$. 
The dynamic programming algorithm described below, in which we process the blobs in a bottom-up fashion, makes heavy use of this insight.

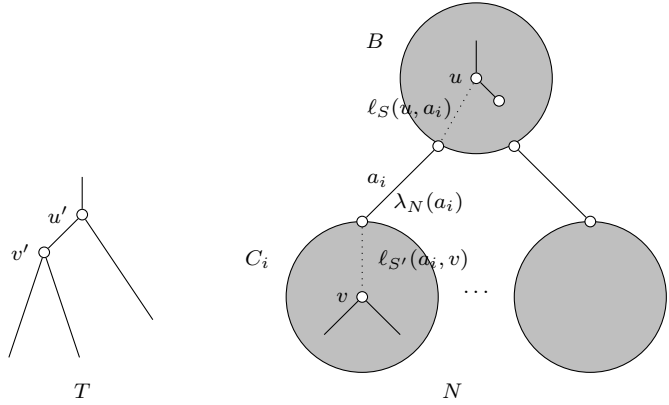
\begin{figure}[h]
 \begin{center}

\begin{tabular}{ccc}
\begin{tikzpicture}
\node[vertex,draw] (root) at (3,5) [label=left:$u'$] {};
\node[vertex,draw] (v0) at (2.5,4.5) [label=left:$v'$] {};

\node (v1) at (4,3.5) {};
\node (v2) at (2,3) {};
\node (v3) at (3,3) {};

\draw (3,5.5) -- (root);
\draw (root) -- (v0);
\draw (root) -- (v1);
\draw (v0) -- (v2);
\draw (v0) -- (v3);
\end{tikzpicture}
& ~ &
\begin{tikzpicture}

\draw[fill=lightgray,radius=1] (-2,1) circle ;
\node (ldots) at (-0.5,1.3) [label=below:$\ldots$] {};
\draw[fill=lightgray,radius=1] (1,1) circle ;
\draw[fill=lightgray,radius=1] (-0.5,3.9) circle ;
\node at (-1.5,4.4) [label=left:$B$] {};
\node[vertex,draw] (u) at (-0.5,3.9) [label=left:$u$] {};
\node[vertex,draw] (us) at (-0.2,3.6) {};
\node at (-0.6,3.5) [label=left:${\ell_{S}(u,a_{i})}$] {}; 
\node[vertex,draw] (vpp) at (-1,3) {};
\node at (-1.8,2.25) [label=right:$\lambda_{N}(a_i)$] {}; 
\node at (-1.8,2.25) [label=above:$a_i$] {}; 
\node[vertex,draw] (vpp2) at (0,3) {};
\node[vertex,draw] (vp2) at (1,2) {};
\node[vertex,draw] (vp) at (-2,2) {};
\node at (-3,1.5) [label=left:$C_i$] {};
\node at (-2,1.5) [label=right:${\ell_{S'}(a_{i},v)}$] {}; 
\node[vertex,draw] (v) at (-2,1) [label=left:$v$] {};

\draw (-0.5,4.4) -- (u);
\draw [dotted] (u) -- (vpp);
\draw (u) -- (us);
\draw (vpp2) -- (vp2);
\draw (vpp) -- (vp);
\draw [dotted] (vp) -- (v);
\draw (v) -- (-2.5,0.5);
\draw (v) -- (-1.5,0.5);

\end{tikzpicture}
\\
$T$ & ~ & $N$
\end{tabular}

 \end{center}
\caption{\textbf{Illustration of the idea at the basis of Algorithm \ref{alg:steven}.} If a network $N$ displays a tree $T$
and the image $u$ of $u'$ (for an arc $(u', v')$ of $T$) lies inside a blob $B$ of $N$, then -- assuming every blob of the network has at least two outgoing arcs -- the image $v$ of $v'$ will either lie inside $B$, or inside a blob $C_i$ that is immediately beneath $B$. In the latter case the image of $(u', v')$ can be naturally partitioned into three parts, as shown. This is the foundation
for the dynamic programming approach used in Theorem \ref{thm:levk} and later in Theorems \ref{thm:relaxedTCBL} and \ref{thm:closestTCBL}.}
\label{fig:steven}
\end{figure}

\begin{theorem}
\label{thm:levk}
Let $N$ be a level-$k$ binary network and $T$ be a rooted binary tree, both on $X$. If
no blob of $N$ is redundant,
then TCBL can be solved in time $O( k \cdot 2^k \cdot n )$ using $O( k \cdot 2^k \cdot n )$ space, where $n=|X|$.
\end{theorem}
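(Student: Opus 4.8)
The plan is to solve TCBL by dynamic programming over the decomposition of $N$ into biconnected components, processed from the leaves up to the root. The starting observation is structural: because no blob is redundant, the outgoing arcs of a blob are cut arcs, and the sets of taxa reachable below two distinct outgoing arcs of the same blob are disjoint (otherwise a common descendant would yield two internally disjoint paths out of the blob, contradicting that each outgoing arc is a cut arc). Consequently the taxa below each biconnected component $C$ form a well-defined subset $X_C \subseteq X$, and the immediately-underneath components of a blob partition its taxa. This has two useful effects. First, if $N$ displays $T$ taking lengths into account, the cluster of every vertex of $T$ must coincide with $X_C$ for the component $C$ hosting (the first branch of) its image; hence for each component there is at most one candidate vertex $w'(C)$ of $T$, namely the one whose set of descendant leaves equals $X_C$. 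Second, since every blob has at least two outgoing arcs, the block tree of $N$ has $O(n)$ blobs, and, each blob being binary of level at most $k$, their total size is $O(kn)$.

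I would then set up the DP so that, for each blob $B$ with top vertex $t_B$, I compute the set $D(B)$ of feasible \emph{descent lengths}: a value $\delta \ge 0$ belongs to $D(B)$ if there is a switching of $B$ under which the top portion of $T_{w'(B)}$ is realised inside $B$ with all its branch lengths matched exactly, and the image of $w'(B)$ (the first branching of the image inside $B$) sits at distance exactly $\delta$ below $t_B$. This quantity $\delta$ is precisely the third part of the three-way split of an arc image depicted in Fig.\ \ref{fig:steven}: the subpath inside the component, below the outgoing arc through which the image entered. Leaves give the trivial base case. The key point is that $|D(B)| \le 2^{r(B)} \le 2^k$, since each switching contributes at most one value, so these sets never depend on the magnitude of the lengths; this is exactly what prevents a pseudo-polynomial blow-up.

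The transition at a blob $B$ with immediately-underneath components $C_1, \dots, C_d$ (reached through outgoing arcs $a_1, \dots, a_d$) is where the real work lies. I would enumerate the at most $2^{r(B)}$ switchings of $B$; each switching turns $B$ into a tree $S_B$ rooted at $t_B$ in which all path lengths are determined. The combinatorial shape of the embedding is forced by the cluster correspondence: the top portion of $T_{w'(B)}$ down to the vertices $w'(C_1), \dots, w'(C_d)$ must be isomorphic to the part of $S_B$ that branches towards the used exits, with $w'(C_i)$ routed out through $a_i$. For each boundary-crossing arc $(u', w'(C_i))$ of $T$, its prescribed length must equal the within-$B$ distance from the image of $u'$ to the tail of $a_i$, plus $\lambda_N(a_i)$, plus an admissible descent into $C_i$; so I would check whether $\lambda_T(u', w'(C_i)) - \ell - \lambda_N(a_i) \in D(C_i)$, where $\ell$ is the within-$B$ distance just described, which is an $O(1)$ membership test. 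If all $d$ crossing arcs and all internal arcs of $S_B$ check out, the switching is feasible and I record the resulting top descent into $D(B)$. At the root I would instead require the image to reach the root of $N$, matching the root arc of $T$ exactly; the answer to TCBL is YES iff this final check succeeds.

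For correctness I would lean on the locality property stated before Fig.\ \ref{fig:steven}, namely that the image of any arc of $T$ crosses at most one blob boundary, which guarantees that the bottom-up recursion never needs more information about a processed subcomponent than its set of feasible descent lengths, together with the cluster argument guaranteeing uniqueness of the candidate vertex at each component. The running-time analysis then multiplies, for each blob $B$, the $2^{r(B)}$ switchings by the $O(r(B)+d(B))$ work needed to build $S_B$, recover the forced embedding and perform the $d(B)$ membership tests; summing over all blobs and using $\sum_B (r(B)+d(B)) = O(kn)$ with $r(B)\le k$ gives $O(k\cdot 2^k\cdot n)$ time, while storing one feasible-descent set of size $\le 2^k$ per blob gives the matching space bound. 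The main obstacle I anticipate is the length bookkeeping at the blob boundaries: making the three-part decomposition precise, verifying that placing the entering vertex's image at the first branch (rather than at the top $t_{C_i}$) is captured exactly by the descent-length sets, and confirming that these sets remain of size at most $2^k$, so that no dependence on the numerical size of the branch lengths enters the complexity.
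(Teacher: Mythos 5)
Your proposal is correct and takes essentially the same approach as the paper's proof: a bottom-up dynamic program over the blob decomposition that enumerates the at most $2^k$ switchings of each blob, discards those failing topological or internal-length checks, and propagates upward precisely the set of feasible root-path (your ``descent'') lengths, using the three-part decomposition of boundary-crossing arc images to formulate the $O(1)$-size queries to child components; the complexity accounting matches as well. The only cosmetic difference is that you pin down the tree-vertex/component correspondence via clusters (the unique vertex of $T$ whose cluster equals $X_C$), whereas the paper establishes the same correspondence by checking that the blobbed-tree can be obtained from $T$ by arc contractions.
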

\begin{proof}
Firstly, note that networks can have nodes that are not inside blobs (i.e. tree-like regions).
To unify the analysis, it is helpful to also regard such a node $u$ (including when $u$ is a taxon) as a blob with 0 reticulations: 
the definition of incoming and outgoing arcs extends without difficulty. Specifically, in this
case they will simply be the arcs incoming to and outgoing from $u$. We regard such blobs as having exactly one switching. 

Next, it is easy to see 
that the blobs of $N$ can themselves be organized as a rooted
tree, known as the \emph{blobbed-tree} \cite{GBP2009,Gusfield2014}.
In other words, the parent-child relation between blobs is well-defined, and unique. 
The idea is to process the blobs in bottom-up, post-order fashion. Hence, if a blob $B$ has blob
children $C_1, C_2, \ldots$ (underneath outgoing arcs $a_1, a_2 \ldots$) we first process $C_1, C_2, \ldots$ and then $B$. Our goal is to identify some switching of $B$ which can legitimately be merged with one switching each from $C_1, C_2, \ldots$. We initialize the dynamic programming
by, for each blob $B$ that is a taxon, recording that it has exactly one switching whose root-path
has length 0. (The definition and meaning of root-path will be given in due course).

For each blob $B$ that is not a taxon, we will loop through the (at most) $2^k$ ways
to switch the reticulations within $B$. Some of these candidate switchings can be
immediately discarded on topological grounds, 
i.e., such a switching of $B$ induces bifurcations that are not present in $T$.
Some other candidate switchings $S$ can be discarded on the basis of the lengths of their internal paths,
that is, the paths $u\rightarrow v$ entirely contained within $S$ coinciding with the image of some arc $(u',v')$ in $T$.
Clearly the path $u\rightarrow v$ must have the same length as $(u',v')$.

Finally, we need to check whether the candidate switching $S$ can be combined with switchings from
$C_1, C_2, \ldots$ such that arc lengths are correctly taken into account. This proceeds
as follows. Observe that, for each outgoing arc $a_i$ of $B$, $a_i$ lies 
on the image of an arc $(u', v')$ of $T$. This arc of $T$ is uniquely defined.
Let $u$ be the image of $u'$ in $B$, and let $\ell_{S}(u, a_i)$ be the total length of the path (in $S$)
from $u$ to the tail of $a_i$. The image of $v'$ will lie somewhere inside $C_i$. For
a switching $S'$ of $C_i$, let $v$ be the image of $v'$ within $S'$, and let $\ell_{S'}(a_i,v)$
be the total length of the path (in $S'$) from the head of $a_i$ to $v$. (See Fig.\ \ref{fig:steven}).

If we wish to
combine $S'$ with $S$, then we have to require $\lambda_T(u',v') = \ell_S(u,a_i) + \lambda_N(a_i) + \ell_{S'}(a_i,v)$.
To know whether such an $S'$ exists, $B$ can ask $C_i$ the question: ``do you have
a candidate switching $S'$ such that  $\ell_{S'}(a_i,v) = \lambda_T(u',v') - \ell_S(u,a_i) - \lambda_N(a_i)$ ?''
This will be true if and only if $C_i$ has a candidate switching $S'$ such that the
\emph{root-path} in $S'$ -- defined as the path from the root of $C_i$ to the first branching node 
of $S'$ -- has length exactly $\lambda_T(u',v') - \ell_S(u,a_i) - \lambda_N(a_i)$.
(We consider a node of $S'$ to be a branching node if it is the image of some node of $T$.)
$B$ queries all its children $C_1, C_2, \ldots$ in this way. If all the $C_i$ answer affirmatively, then
we store $S$, together with the length of its root-path, as a candidate switching of $B$,
otherwise we discard $S$.

This process is repeated until we have finished processing the highest blob $B$ of $N$. 
The answer to TCBL is YES, if and only if this highest blob $B$ has stored at least one candidate switching.
Pseudocode formalizing the description above is provided in Algorithm \ref{alg:steven}.


We now analyse the running time and storage requirements.  For step 1, observe that the blobbed-tree can easily be constructed once all the biconnected components of the undirected, underlying graph of $N$ have been identified. The biconnected components can be found in linear time (in the size of the graph) using the well-known algorithm of Hopcroft and Tarjan (see, e.g., \cite{Intro_to_Algorithms}).
Because every blob has at least two outgoing arcs,
$N$ will have $O(kn)$ vertices and arcs, 
(see, e.g., Lemma 4.5 in \cite{LeosThesis} and discussion thereafter)
so the time to construct the blobbed-tree is at most $O(kn)$.
Moreover, $N$ has $O(n)$ blobs, meaning that the blobbed-tree has $O(n)$ nodes and that
step 2 can be completed in $O(n)$ time by checking whether 
$T$ and the blobbed-tree are compatible. (The compatibility of two trees can be tested in linear
time \cite{warnow1994tree}.) Each blob has at most $2^k$ switchings, and each switch can be encoded
in $k$ bits. If we simply keep all the
switchings in memory (which can be useful for constructing an actual switching of $N$,
whenever the answer to TCBL is YES) then at most $O(k \cdot 2^k \cdot n)$ space is required. 

For time complexity, each blob $B$ loops through at most $2^k$ switchings, and for each switching $S$ it is necessary to check the topological legitimacy of $S$ (step 3(a)), that internal paths of the switching have the correct lengths (step 3(b)), and subsequently to make exactly one query to each of its child blobs $C_i$ (step 3(c)).  We shall return
to steps 3(a) and 3(b) in due course. It is
helpful to count queries from the perspective of the blob that is queried. In the entire
course of the algorithm, a blob will be queried at most $2^k$ times. 
Recalling that the number of blobs is $O(n)$, in total at most $O( 2^k \cdot n )$ queries will be made, 
so the total time devoted to queries is $O( q \cdot 2^k \cdot n )$, where $q$ is the time to answer each query. 
\change{Recall that a query consists of checking whether a blob has a switching whose root-path has a given length.}
Each blob needs to store at most $2^k$ switchings. By storing these switchings (ranked by the lengths of their
root-paths) in a balanced look-up structure (e.g. red-black trees) an incoming query can
be answered in \change{logarithmic time in the number of stored switchings, that is, in}
time $\log 2^k = k$. Hence, the total time spent on queries is $O( k \cdot 2^k \cdot n )$.


For steps 3(a) and 3(b) we require amortized analysis. Let $d^+(B)$ denote the number of outgoing arcs from blob $B$.
The blob $B$ can be viewed in isolation as a rooted phylogenetic network with $d^+(B)$ ``taxa'', so inside $B$ there are
$O(k \cdot d^+(B) )$ vertices and arcs \cite{LeosThesis}.
Therefore, the time to convert a switching $S$ from $B$ into a tree $T'$ on $d^+(B)$ ``taxa'' 
(via dummy leaf deletions and vertex smoothings)
is at most $O(k \cdot d^+(B) )$. 
The topology 
and internal arc lengths of $T'$ can be checked against those of the corresponding part of $T$ 
in $O(  d^+(B)  )$ time \cite{warnow1994tree}.
Hence, the total time spent on steps 3(a) and 3(b) is
\begin{equation}
\label{eq:amort}
 \sum_B O( 2^k \cdot  k \cdot d^+(B) ),
\end{equation}
where the sum ranges over all blobs. Note that $\sum_B  d^+(B)$ is $O(n)$ because there are $O(n)$ blobs and
each outgoing arc enters exactly one blob. Hence, the expression (\ref{eq:amort}) is $O( k \cdot 2^k  \cdot n )$,
matching the time bound for the queries. 
Hence, the overall running time of the algorithm is $ O( k \cdot 2^{k} \cdot n )$. \qed
\end{proof}

\begin{framed} 
\begin{algorithm} \label{alg:steven}
\normalfont
FPT algorithm for \textsc{TCBL} on networks with no redundant blobs
\begin{enumerate}
\item
Decompose $N$ into blobs and construct the blobbed-tree $T_N$, whose nodes are the blobs in $N$ and whose arcs are the arcs external to the blobs.

\item
Check that $T_N$ is compatible with the input tree $T$ 
(in fact check that $T_N$ can be obtained from $T$ via arc contractions). 
If this is not the case, then terminate with a NO. 
Otherwise 
each vertex $B$ in $T_N$ can be obtained as the contraction of a subtree $T(B)$ of $T$,
and each arc $a$ in the blobbed-tree $T_N$ originates from an arc $a'$ in $T$.
Store references to the $a'$ and the $T(B)$.

\item
\begin{description}
\item \textbf{for each} blob $B$, in bottom-up order:
  \begin{description}
    \item \textbf{for each} switching $S$ of $B$:

    \begin{enumerate}
      \item check that $S$ is topologically compatible with $T(B)$. \label{step:topocheck}

      \item \label{step:internal_checks}
      check that each arc of $T(B)$ is as long as its image in $S$; 

      \item \textbf{for each} blob $C_i$ that is a child of $B$, via the arc $a_i$: \label{step:foreach_blob}
      \begin{itemize}
        \item check that $C_i$ has stored a switching $S'$ 
        whose root-path has the appropriate length.
        Specifically, we require $\ell_{S'}(a_i,v) = \lambda_T(u',v') - \ell_S(u,a_i) - \lambda_N(a_i)$,
        where $(u', v')$ is the arc of $T$ on whose image $a_i$ lies (i.e. $a'_i$),
        and $u$ and $v$ are the uniquely defined images of $u'$ and $v'$ in $S$ and $S'$, respectively.
      \end{itemize}


      \item if none of the checks above failed, store $S$ along with 
      the length of its root-path; \label{step:what_to_store}

      \item if no switching is stored for $B$, then terminate with NO, as no tree displayed by $N$ satisfies the requirements. 
      \end{enumerate}

  \end{description}
\end{description}
\item If the algorithm gets this far, then it returns YES and the 
image in $N$ of $T$ can be obtained by combining a switching $S$ 
stored for the root blob, to the switchings $S'$ found for its child blobs, recursively. \label{step:conclusion}
\end{enumerate}
\end{algorithm}
\end{framed} 

\subsection{Pseudo-polynomial solution of TCBL on any network of bounded level\label{pseudoPoly}}

Redundant blobs are problematic for TCBL 
because when they appear ``in series'' (as in Fig.\ \ref{fig:level2Blob}) they give rise to an exponential explosion of paths that can be the images of an arc $a$ in $T$, and, as we saw, checking the existence of a path of the appropriate length $\lambda_T(a)$ is at least as hard as \textsc{subset sum}. Just like for \textsc{subset sum}, however, a pseudo-polynomial time solution is possible, as we now show.


\begin{theorem}
\label{thm:pseudopoly_TCBL}
Let $N$ be a level-$k$ binary network with $b$ blobs, and let $T$ be a rooted binary tree on the same set of $n$ taxa.
TCBL can be solved in time $O( k \cdot b \cdot L + 2^k \cdot n \cdot L )$ using $O( k \cdot n \cdot L )$ space, 
where $L$ is an upper bound on arc lengths in $T$. 
\end{theorem}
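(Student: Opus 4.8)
The plan is to adapt the bottom-up, blob-by-blob dynamic program of Theorem \ref{thm:levk} so that it copes with \emph{redundant} blobs, which were precisely the case excluded there. The obstacle redundant blobs create is already visible in the hardness construction of Theorem \ref{TCBLlevel2}: when redundant blobs occur in series, the image of a single arc of $T$ may run straight through a whole chain of them without branching, and the number of distinct lengths such a pass-through path can realise is subset-sum-like, hence potentially exponential. The whole point is therefore to avoid enumerating these paths and instead to track, for each relevant vertex, the \emph{set} of achievable path lengths as a Boolean array indexed by $\{0,1,\dots,L\}$ --- the same pseudo-polynomial device that solves \textsc{subset sum}. Any partial length exceeding $L$ may be discarded, since the image of an arc $a$ of $T$ must have total length $\lambda_T(a)\le L$; thus each array has size $O(L)$.

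First I would build the blobbed-tree $T_N$ and classify its blobs as redundant (one outgoing arc, hence one child in $T_N$) or non-redundant ($\ge 2$ outgoing arcs). As in Algorithm \ref{alg:steven}, I check that the tree obtained from $T_N$ by suppressing the redundant blobs is compatible with $T$, returning NO otherwise. The redundant blobs then form maximal \emph{chains} hanging below the outgoing arcs of non-redundant blobs (or below the root), each chain terminating at a non-redundant blob or a taxon $D$. Since a redundant binary blob of a level-$k$ network has only $O(k)$ vertices and arcs, I can reason about it via the following elementary fact: over all switchings, the set of lengths of the image path from the blob's root to its unique outgoing arc equals the set of lengths of directed root-to-exit paths in the blob's DAG. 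Indeed, a switching selects one such path and all other branches, ending at dummy leaves, are deleted; conversely every directed root-to-exit path is realised by some switching.

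The core step is propagation along a chain, and this is where I expect the real work to lie. Let $R(D)\subseteq\{0,\dots,L\}$ be the set of achievable \emph{root-path} lengths of the bottom blob $D$ (for a taxon, $R(D)=\{0\}$; for a non-redundant $D$ this is produced exactly as in Theorem \ref{thm:levk}). Processing the chain from $D$ upwards, I maintain a Boolean array $U_v$ over $\{0,\dots,L\}$ at each vertex $v$, where $U_v[\ell]$ records whether there is a consistent switching below $v$ whose path from $v$ to the first branching node has length $\ell$. I initialise $U$ at the root of $D$ with $R(D)$, and then sweep the $O(k)$ arcs of each redundant blob in reverse topological order; each arc (and each arc connecting consecutive blobs) updates $U$ by a single shift-and-OR in $O(L)$ time. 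Crucially, \emph{no sumset is ever formed} --- the offset contributed by the lower blobs is absorbed into the initialisation, so the whole chain costs $O(k\cdot L)$ per redundant blob, i.e.\ $O(k\cdot b\cdot L)$ in total. This computes the array $U$ at the \emph{head} of the outgoing arc $a_i$ of the non-redundant blob sitting above the chain.

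Finally I run the Theorem \ref{thm:levk} dynamic program on the $O(n)$ non-redundant blobs essentially unchanged, except that the query ``does child $C_i$ have a switching whose root-path has length $\lambda_T(u',v')-\ell_S(u,a_i)-\lambda_N(a_i)$?'' is now answered by a single lookup into the precomputed array $U$ at the head of $a_i$. For the root blob, the answer is YES iff its array contains $\lambda_T(\rho,c)$, where $(\rho,c)$ is the arc leaving the root of $T$. Each non-redundant blob still loops over its $\le 2^k$ switchings, performing the topological- and internal-length checks and the child lookups described in Algorithm \ref{alg:steven}, and writing its root-path set into an $O(L)$ array; charging these exactly as in the amortised analysis of Theorem \ref{thm:levk} gives $O(2^k\cdot n\cdot L)$ for this part. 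Adding the $O(k\cdot b\cdot L)$ spent on the chains yields the claimed running time, with the space dominated by the length-indexed Boolean arrays, giving $O(k\cdot n\cdot L)$. The main difficulty is thus concentrated in the redundant chains; once these are handled by the length-indexed shift-and-OR propagation, everything else reduces to the already-established machinery of Theorem \ref{thm:levk}.
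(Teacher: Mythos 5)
Your proposal is correct and takes essentially the same approach as the paper: both handle chains of redundant blobs with pseudo-polynomial, length-indexed Boolean arrays (capped at $L$, since any partial image path longer than the corresponding tree arc can be discarded) propagated along the $O(k)$ arcs of each redundant blob in topological order, and then plug the resulting length information into the blob-by-blob dynamic program of Theorem \ref{thm:levk}, giving the same $O(k \cdot b \cdot L + 2^k \cdot n \cdot L)$ time and $O(k \cdot n \cdot L)$ space. The only (immaterial) bookkeeping difference is that the paper computes each chain's set of pass-through lengths $L(a)$ top-down and separately from the child blob's stored switchings, answering each query by iterating over $\ell \in L(a)$ with constant-time lookups into the child's root-path-indexed array, whereas you fold the child's root-path lengths into a bottom-up shift-and-OR propagation so that each query becomes a single $O(1)$ lookup.
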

\begin{proof}
The algorithm we now describe is based on the following two observations 
(we use here the same notational conventions as in Algorithm \ref{alg:steven}).
First, if $T$ is indeed displayed by $N$, the image $u\rightarrow v$ of any of
its arcs $(u',v')$ will either be entirely contained in one blob, or $u$ and 
$v$ will be in two different blobs, which can only be separated by redundant blobs.
Second, it only makes sense to store a switching $S'$ of a blob $C_i$, if 
$\ell_{S'}(a_i,v) < \lambda_T(u',v')$, i.e., if its root-leaf path is shorter
than the corresponding arc in $T$, meaning that we only need to store $O(L)$ switchings per blob.

Accordingly, we modify Algorithm \ref{alg:steven} as follows:
\begin{description}
\item[Step 2a:] Check that $T_N$ is compatible with the input tree $T$ in the following way:
replace any chain of redundant blobs $M_1,M_2,\ldots,M_h$ in $T_N$ with a single arc 
from the parent of $M_1$ to the child of $M_h$,
and then check that the resulting blobbed-tree $T'_N$ can be obtained from $T$ via arc contractions.
If this is not the case then terminate with a NO. 
Otherwise for each arc $a$ and vertex $B$ in $T'_N$, define and store $a'$ and $T(B)$ as before.
\item[Step 2b:] 
For each arc $a$ in $T'_N$, calculate a set of lengths $L(a)$ as follows.
If $a$ originates from a chain of redundant blobs $M_1,M_2,\ldots,M_h$, then $L(a)$ is obtained 
by calculating the lengths of all paths in $N$ starting with the incoming arc of $M_1$ 
and ending with the (unique) outgoing arc of $M_h$. Only keep the lengths that are smaller than $\lambda_T(a')$.
For the remaining arcs in $T'_N$, simply set $L(a):=\{\lambda_N(a)\}$.
\end{description}
The algorithm only visits non-redundant blobs, performing a bottom-up traversal of $T'_N$, 
and doing the same as Algorithm \ref{alg:steven} except for:
\begin{description}
\item[Step \ref{step:foreach_blob}:] \textbf{for each} blob $C_i$ that is a child of $B$ in $T'_N$ :
\begin{itemize}
\item check the existence of an $\ell\in L(a_i)$ and a switching $S'$ stored for $C_i$ that satisfy:
\begin{equation}
  \ell_{S'}(a_i,v) = \lambda_T(u',v') - \ell_S(u,a_i) - \ell. \label{eqn:pseudopoly_TCBL}
\end{equation}
\end{itemize}
\end{description}

To complete the description of the algorithm resulting from these changes, we assume that the switchings for a (non-redundant) blob $B$ are stored in an array $S_B$ indexed by the root-path length of the switching. 
If two or more switchings of a blob have the same root-path length $\ell$, we only keep one of them in $S_B[\ell]$.
Because for $C_i$ we only store the switchings whose root-path length is less than $\lambda_T(a'_i)$, the $S_B$ arrays have
size $O(L)$.

As for step 2b above, the computation of $L(a)$ for an arc in $T'_N$ corresponding to a chain of redundant blobs can be implemented in a number of ways. Here we assume that the vertices in $M_1,M_2,\ldots,M_h$ are visited following a topological ordering, and that, for each visited vertex $v$, we fill a boolean array $P_v$ of length $\lambda_T(a')$, where $P_v[\ell]$ is true if and only if there exists a path of length $\ell$ from the tail of the arc incoming $M_1$ to $v$. Once $P_{v_h}$ for the head $v_h$ of the arc outgoing $M_h$ has been filled, $L(a)$ will then be equal to the set of indices $\ell$ for which $P_{v_h}[\ell]$ is true.

We are now ready to analyse the complexity of this algorithm. We start with the space complexity. First note that every redundant blob of level $k$ in a binary network must have exactly $2k$ vertices (as the number of bifurcations must equal the number of reticulations). Because each redundant blob has $O(k)$ vertices, and each $P_v$ array is stored in $O(L)$ space, step 2b requires $O(kL)$ space to process each redundant blob. Because every time a new redundant blob $M_{i+1}$ is processed, the $P_v$ arrays for the vertices in $M_{i}$ can be deleted, step 2b only requires $O(kL)$ space in total. This however is dominated by the space required to store $O(L)$ switchings for each non-redundant blob. Since there are $O(n)$ non-redundant blobs in $N$ and each switching requires $O(k)$ bits to be represented, the space complexity of the algorithm is $O( k \cdot n \cdot L )$. 

We now analyse the time complexity. Checking the compatibility of the blobbed tree and $T$ (step 2a) can be done in time $O(n+b)$, as this is the size of $T'_N$ before replacing the chains of redundant blobs. The computation of the arrays $P_v$ (step 2b) involves $O(L)$ operations per arc in $M_1,M_2,\ldots,M_h$. Because there are $O(b)$ redundant blobs, and because each of them contains $O(k)$ arcs, calculating all the $P_v$ arrays requires time $O( k \cdot b \cdot L)$.

The other runtime-demanding operations are the queries in step \ref{step:foreach_blob}. 
These involve asking, for each $\ell\in L(a_i)$, whether $C_i$ has a switching whose 
root-path has the length in Eqn.\ (\ref{eqn:pseudopoly_TCBL}). 
Each of these queries can be answered 
in constant time by checking whether $S_{C_i}[\lambda_T(u',v') - \ell_S(u,a_i) - \ell\,]$ is filled or not. 
Because every non-redundant blob $C_i$ will be queried at most $2^k\cdot L(a_i)$ times, 
and because there are $O(n)$ non-redundant blobs, the total time devoted to these queries 
is $O(2^k \cdot n \cdot L)$.
The remaining steps require the same time complexities as in Theorem \ref{thm:levk}. By adding up all these runtimes we obtain a total time complexity of $O( k \cdot b \cdot L + 2^k \cdot n \cdot L )$.
\qed
\end{proof}

\subsection{\textsc{closest-TCBL} and \textsc{relaxed-TCBL} are FPT in the level of the network when no blob is redundant} \label{sec:FPT2relProblems}

We now show that Algorithm \ref{alg:steven} can be adapted to solve the ``noisy'' variations of \textsc{TCBL} that we have introduced in the Preliminaries section.

\begin{theorem}
\label{thm:relaxedTCBL}
Let $N$ be a level-$k$ binary network and $T$ be a rooted binary tree, both on the same set of $n$ taxa. 
The arcs of $N$ are labelled by positive integer lengths, and the arcs of $T$ are labelled by a minimum and a maximum positive integer length.
If no blob of $N$ is redundant, then \textsc{relaxed-TCBL} can be solved in $O(k \cdot 2^k \cdot n )$ time and space.
\end{theorem}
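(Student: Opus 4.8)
The plan is to adapt the dynamic programming of Algorithm \ref{alg:steven} (used for \textsc{TCBL} in Theorem \ref{thm:levk}), changing only the two places where an exact arc length of $T$ is tested against the network. The key observation making this possible is that, since the arcs of $N$ carry fixed positive integer lengths, every switching $S'$ of a blob $C_i$ still has a single, well-defined root-path length; the relaxation affects only $T$, not $N$. Hence the dynamic-programming state is unchanged: processing the blobbed-tree bottom-up, for each blob we store the set of root-path lengths achievable by its (at most $2^k$) switchings, together with a witness switching for each. The answer to \textsc{relaxed-TCBL} is YES if and only if the top blob stores at least one switching, and a witness tree $\widetilde{T}$ is recovered by backtracking through the stored witnesses.

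The first modification concerns the internal checks (step \ref{step:internal_checks}). Where Algorithm \ref{alg:steven} required that a path $u\to v$ entirely contained in a switching $S$, coinciding with the image of an arc $(u',v')$ of $T$, have length exactly $\lambda_T(u',v')$, I would instead require this length to lie in $[m_T(u',v'),M_T(u',v')]$, discarding $S$ only if some internal path violates its interval. The second modification concerns the combining step (step \ref{step:foreach_blob}). When merging a switching $S$ of $B$ with a switching $S'$ of a child $C_i$ via the outgoing arc $a_i$, the total image length of the arc $(u',v')$ of $T$ is $\ell_S(u,a_i)+\lambda_N(a_i)+\ell_{S'}(a_i,v)$, which must now fall in $[m_T(u',v'),M_T(u',v')]$. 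Equivalently, $B$ asks $C_i$ a \emph{range} query: does $C_i$ have a stored switching whose root-path length lies in
\[
  \bigl[\, m_T(u',v') - \ell_S(u,a_i) - \lambda_N(a_i),\; M_T(u',v') - \ell_S(u,a_i) - \lambda_N(a_i) \,\bigr]?
\]
If every child answers affirmatively, $S$ is stored together with the length of its root-path, exactly as before.

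For the complexity, the only nontrivial point is the cost of a range query. Keeping each blob's switchings in a balanced search structure (e.g.\ a red-black tree) keyed by root-path length, a range query reduces to locating the successor of the lower endpoint and testing whether it does not exceed the upper endpoint; this takes $O(\log 2^k)=O(k)$ time, the same as the exact-match query in Theorem \ref{thm:levk}. Every blob is queried at most $2^k$ times, there are $O(n)$ blobs, the interval test in step \ref{step:internal_checks} is no more expensive than the equality test it replaces, and the amortized cost of the topological and internal checks is unchanged. Thus the running time remains $O(k\cdot 2^k\cdot n)$ and, storing up to $2^k$ switchings of $O(k)$ bits over $O(n)$ blobs, the space is $O(k\cdot 2^k\cdot n)$.

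The main obstacle, such as it is, lies in justifying the two invariants that keep the adaptation sound: that the root-path length of a switching remains a single value (so the state space does not grow with the intervals), and that combining across a cut arc still decouples into a range query against the child. Both follow from the fact that the image of $(u',v')$ partitions into the three parts shown in Fig.\ \ref{fig:steven} -- an internal path in $B$, the arc $a_i$, and the root-path in $C_i$ -- of which the first two are fixed once $S$ is chosen, so only the achievable child root-path length varies, and the interval constraint on $(u',v')$ is met exactly when some such length lands in the shifted interval above.
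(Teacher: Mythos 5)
Your proposal is correct and follows essentially the same route as the paper: both adaptations of Algorithm \ref{alg:steven} -- replacing the equality test on internal path lengths in step \ref{step:internal_checks} by an interval membership test, and turning the child query in step \ref{step:foreach_blob} into a range query on root-path lengths, answered in $O(\log 2^k)=O(k)$ time via a balanced look-up structure -- match the paper's proof, as does the resulting $O(k \cdot 2^k \cdot n)$ time and space analysis. Your added justification that the dynamic-programming state is unchanged (since the relaxation affects only $T$ while switchings of $N$ retain a single root-path length) is a sound elaboration of what the paper leaves implicit.
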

\begin{proof}
We modify Algorithm \ref{alg:steven} to allow some flexibility whenever a check on lengths is made: 
instead of testing for equality between arc lengths in the tree and the
path lengths observed in the partial switching under consideration, 
we now check that the path length belongs to the input interval.
Specifically, we modify two steps in Algorithm \ref{alg:steven} as follows:
\begin{description}
\item[Step \ref{step:internal_checks}:] check that every arc $(u',v')$ of $T(B)$, 
whose image is an internal path $u\rightarrow v$ of $S$, is such that $\ell_S(u,v) \in [m_T(u',v'), M_T(u',v')]$.
\item[Step \ref{step:foreach_blob}:] check that, among the switchings stored for $C_i$, there exists at least one switching $S'$ whose root-path $\ell_{S'}(a_i,v)$ has a length in the appropriate interval. Specifically, using the same notation as in Algorithm \ref{alg:steven}, check that:
\[
  \ell_S(u,a_i) + \lambda_N(a_i) + \ell_{S'}(a_i,v) \quad \in \quad [m_T(a_i'), M_T(a_i')],
\]
that is:
\begin{equation} \label{eq:interval}
  m_T(a_i') - \ell_S(u,a_i) - \lambda_N(a_i) \quad\le\quad \ell_{S'}(a_i,v)  \quad\le\quad M_T(a_i') - \ell_S(u,a_i) - \lambda_N(a_i).
\end{equation}
\end{description}
We can use the same data structures used by Algorithm \ref{alg:steven}, so the space complexity remains $O(k \cdot 2^k \cdot n )$. As for time complexity, the only relevant difference is in step \ref{step:foreach_blob}: instead of querying about the existence of a switching with a definite path-length, we now query about the existence of a switching whose path-length falls within an interval (see Eqn.\ (\ref{eq:interval})). In a balanced look-up structure, this query can be answered again in time $O(\log 2^k)=O(k)$.
In conclusion the time complexity remains the same as that in Theorem \ref{thm:levk}, that is $O(k \cdot 2^k \cdot n )$. 
\qed
\end{proof}

\begin{theorem}
\label{thm:closestTCBL}
Let $N$ be a level-$k$ binary network and $T$ be a rooted binary tree, both with  \change{positive}  integer arc lengths and on the same set of $n$ taxa. 
If no blob of $N$ is redundant, then \textsc{closest-TCBL} can be solved in
time $O( 2^{2k} \cdot n )$ using $O( k \cdot 2^k \cdot n )$ space.
\end{theorem}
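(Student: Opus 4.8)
The plan is to adapt the bottom-up dynamic program of Algorithm~\ref{alg:steven} (already reused in Theorems~\ref{thm:levk} and~\ref{thm:relaxedTCBL}), processing the blobs of $N$ along the blobbed-tree from the leaves up. The essential shift is that \textsc{closest-TCBL} is an optimisation, not a decision, problem: I can no longer discard a switching the instant its induced path lengths disagree with $T$, because every topologically compatible switching yields \emph{some} tree isomorphic to $T$, and I only wish to minimise the largest discrepancy. Accordingly I would retain \emph{every} topologically compatible switching of each blob and attach to it a single scalar, namely the smallest achievable maximum deviation over the part of $T$ already resolved strictly beneath it. Note that tracking this scalar directly (rather than, say, binary-searching on the value of the optimum and invoking \textsc{relaxed-TCBL}) is what keeps the running time independent of the arc lengths $L$.

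Concretely, for a blob $B$ and a switching $S$ of $B$ I would define $\mathrm{cost}(B,S)$ to be the minimum, over all choices of switchings in the blobs strictly below $B$, of $\max_a|\lambda_T(a)-\lambda_{\widetilde T}(\tilde a)|$, where $a$ ranges over the arcs of $T$ whose image lies entirely in the sub-network rooted at $B$ (equivalently, the arcs of $T(B)$ together with all arcs in the descendant subtrees). A taxon is the base case: its unique switching has root-path length $0$ and $\mathrm{cost}=0$. For an internal blob $B$ with switching $S$ and children $C_1,C_2,\dots$ reached through outgoing arcs $a_1,a_2,\dots$, I would set
\[
\mathrm{cost}(B,S)=\max\Bigl(\,\max_{a\in T(B)}\bigl|\lambda_T(a)-\ell_S(a)\bigr|,\ \max_i \ \min_{S'}\ \max\bigl(\delta_i(S,S'),\ \mathrm{cost}(C_i,S')\bigr)\Bigr),
\]
where $\ell_S(a)$ denotes the length of the image of the internal arc $a$ within $S$, the inner minimum is over switchings $S'$ of $C_i$, and, writing $(u',v')$ for the arc of $T$ whose image passes through $a_i$ and $u$, $v$ for the images of $u'$, $v'$ in $S$ and $S'$, the crossing discrepancy is $\delta_i(S,S')=\bigl|\lambda_T(u',v')-\ell_S(u,a_i)-\lambda_N(a_i)-\ell_{S'}(a_i,v)\bigr|$. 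The one point needing justification is the interchange of $\max_i$ and $\min_{S'}$: because the arcs of $T$ under distinct children are disjoint, for a \emph{fixed} parent switching $S$ the per-child optimisations are independent, so the minimum of a maximum of independent terms equals the maximum of the per-child minima. The answer is $\min_S \mathrm{cost}(R,S)$ over switchings $S$ of the root blob $R$ (whose root-path is an ordinary internal arc, since the image of the root of $T$ is the root of $N$), and FAIL is reported exactly when $R$, hence $N$, has no topologically compatible switching. An optimal $\widetilde T$ is recovered by storing, for each pair $(B,S)$ and each child, the minimising child switching, and back-tracking from $R$.

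For the complexity I would reuse the amortised bookkeeping of Theorem~\ref{thm:levk}. Building the blobbed-tree and checking its compatibility with $T$ is $O(kn)$, and the topological and internal-length evaluations over all switchings (the first term of the recurrence, now computing a deviation rather than a pass/fail) cost $O(k\cdot 2^k\cdot n)$ in total. The new ingredient is the second term: since we optimise instead of looking up, for each of the $\le 2^k$ switchings $S$ of $B$ and each child $C_i$ we must scan all $\le 2^k$ switchings $S'$ of $C_i$; summing $2^k\cdot 2^k$ over the $O(n)$ parent-child pairs of the blobbed-tree gives $O(2^{2k}\cdot n)$, which dominates $O(k\cdot 2^k\cdot n)$ because $2^k\ge k$. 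Storage matches Theorem~\ref{thm:levk}: the $\le 2^k$ switchings per blob ($O(k)$ bits each) together with their root-path lengths, costs, and back-tracking pointers are all bounded by $O(k\cdot 2^k\cdot n)$.

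The main obstacle I anticipate is conceptual rather than computational: verifying that the single scalar $\mathrm{cost}(B,S)$ is a \emph{sufficient} summary of the entire subtree below $B$, i.e.\ that nothing beyond this per-switching worst-case deviation is ever needed higher up. This rests precisely on the no-redundant-blob hypothesis, which guarantees that the image of every arc of $T$ either stays within one blob or crosses exactly one outgoing arc into a child blob; consequently a crossing arc couples a parent switching only to a child's root-path length, never to the internal details of the child's subtree. Granting this structural fact (already exploited in Theorem~\ref{thm:levk}), the min-max recurrence together with its interchange yields correctness, and the brute-force scan over child switchings yields the stated bounds.
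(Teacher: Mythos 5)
Your proposal is correct and follows essentially the same route as the paper's proof: your $\mathrm{cost}(B,S)$ is exactly the paper's quantity $\mu_S$ (likewise excluding the arc crossing into $B$, which is charged to the parent via the crossing discrepancy), your min--max recurrence with the per-child minimization matches the paper's modified steps of Algorithm~\ref{alg:steven}, and your interchange-of-$\max$-and-$\min$ argument is the paper's exchange ``Observation.'' The complexity accounting ($O(2^{2k}\cdot n)$ time from scanning all $O(2^k)$ child switchings per query, $O(k\cdot 2^k\cdot n)$ space with back-pointers) also coincides with the paper's analysis.
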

\begin{proof}
As we shall see, we modify Algorithm \ref{alg:steven} by removing all checks on arc lengths, and by keeping references to those switchings that may become part of an optimal solution in the end: any topologically-viable switching $S$ of a blob $B$ is stored along with a reference, for each child blob $C_i$, to the switching $S'$ that must be combined with $S$. Moreover, we compute recursively $\mu_S$, which we define as follows: 
\[
  \mu_S = \max \left| \lambda_T(a) - \lambda_{\widetilde{T}}(\tilde{a}) \right|,
\]
where $\widetilde{T}$ is the subtree displayed by $N$ obtained by (recursively) combining $S$ to the switchings stored for its child blobs, and then applying dummy leaf deletions and vertex smoothings. The $\max$ is calculated over any arc $\tilde{a}$ in $\widetilde{T}$ and its corresponding arc $a$ in $T$, excluding the root arc $\tilde{a}_r$ of $\widetilde{T}$ from this computation. This is because the length of the path above $S$, which must be combined with $\tilde{a}_r$,
is unknown when $S$ is defined.

In more detail, we modify Algorithm \ref{alg:steven} as follows:

\begin{description}
\item[Step \ref{step:internal_checks}:] no check is made on the lengths of the internal paths of $S$; instead initialize $\mu_S$ as follows:
\[
\mu_S := \max \left| \lambda_T(u',v') - \ell_S(u,v) \right|,
\]
where the $\max$ is over all arcs $(u',v')$ in $T(B)$, and $u, v$ are the images of $u', v'$ in $S$, respectively.
Trivially, if $B$ is just a vertex in $N$, the $\max$ above is over an empty set, meaning that $\mu_S$ can be initialized 
to any sufficiently small value (e.g., 0).

\item[Step \ref{step:foreach_blob}:] \textbf{for each} blob $C_i$ that is a child of $B$:
\begin{itemize}
\item among the switchings stored for $C_i$, seek the switching $S'$ minimizing
\begin{equation} \label{eq:optimality_principle}
\max \left\{ \mu_{S'}, \left| \ell_S(u,a_i) + \lambda_N(a_i) + \ell_{S'}(a_i,v) - \lambda_T(u',v') \right| \right\}
\end{equation}
\item set $\mu_S$ to $\max \left\{ \mu_S, \mbox{value of (\ref{eq:optimality_principle}) for } S' \right\}$
\end{itemize}
\item[Step \ref{step:what_to_store}:] 
store $S$ along with $\mu_S$, with the length of its root-path, and with references to the child switchings $S'$ minimizing (\ref{eq:optimality_principle})
\item[Step \ref{step:conclusion}:] seek the switching $S$ stored for the root blob that has minimum $\mu_S$, and combine it recursively to the switchings $S'$ found for its child blobs. In the end,  a switching $\widetilde{S}$ for the entire network $N$ is obtained, which can be used to construct $\widetilde{T}$.
\end{description}

The correctness of the algorithm presented above is based on the following observation, allowing our dynamic programming solution of the problem: 

\textbf{Observation.}
Let $B$ be a blob of $N$, and $C_i$ be one of its child blobs. If a switching $S$ of $B$ is part of an optimal solution to \textsc{closest-TCBL}, then we can assume that $S$ must be combined with a switching $S'$ of $C_i$ that minimizes  Eqn.\ (\ref{eq:optimality_principle}). This means that even if there exists an optimal solution in which $S$ is combined with $S''$, a non-minimal switching of $C_i$ with respect to Eqn.\ (\ref{eq:optimality_principle}), then we can replace $S''$ with $S'$ and the solution we obtain will still be optimal. 

Once again, space complexity is $O(k \cdot 2^k \cdot n )$, 
as the only additional objects to store are the references to the child switchings of $S$, 
and the value of $\mu_{S}$ for each the $O(2^k \cdot n)$ stored switchings.
As for time complexity, each query within step \ref{step:foreach_blob} now involves scanning the entire set of $O(2^k)$ switchings stored for $C_i$, thus taking time $O(2^k)$ -- whereas the previous algorithms only required $O(k)$ time. Since there are again $O(2^k \cdot n)$ queries to make, the running time is now $O( 2^{2k} \cdot n )$.
\qed
\end{proof}

We conclude this section by noting that, if we reformulate \textsc{closest-TCBL} replacing the $\max$ with a 
sum, and taking any positive power of the absolute value $\left| \lambda_T(a) - \lambda_{\widetilde{T}}(\tilde{a}) \right|$ 
in the objective function, then the resulting problem can still be solved in a way analogous to that described above.
\begin{theorem}
\label{thm:reformulations}
Consider the class of minimization problems obtained from \textsc{closest-TCBL} by replacing its objective function with
\begin{equation}
\biguplus_a \left| \lambda_T(a) - \lambda_{\widetilde{T}}(\tilde{a}) \right|^d,
\end{equation}
with $\biguplus$ representing either $\max$ or $\sum$, 
and with $d>0$. 

If no blob of $N$ is redundant, then any of these problems can be solved 
in time $O( 2^{2k} \cdot n )$ using $O( k \cdot 2^k \cdot n )$ space, where $n$ is the number 
of taxa in $N$ and $T$, and $k$ is the level of $N$. 
\end{theorem}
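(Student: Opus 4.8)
The plan is to treat the two choices of $\biguplus$ separately, in both cases leaning on the dynamic-programming scheme behind Theorem \ref{thm:closestTCBL}.

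First I would dispose of the case $\biguplus = \max$. Since $t \mapsto t^d$ is strictly increasing on $[0,\infty)$ for every $d>0$, we have
\[
\max_a \left| \lambda_T(a) - \lambda_{\widetilde{T}}(\tilde{a}) \right|^d = \left( \max_a \left| \lambda_T(a) - \lambda_{\widetilde{T}}(\tilde{a}) \right| \right)^d.
\]
Hence a tree $\widetilde{T}$ minimises the new objective if and only if it minimises the objective of \textsc{closest-TCBL}. The two problems therefore have exactly the same set of optimal solutions, so we may invoke Theorem \ref{thm:closestTCBL} verbatim (reporting, if desired, the $d$-th power of the returned optimum), which already yields the claimed $O(2^{2k}\cdot n)$ time and $O(k\cdot 2^k\cdot n)$ space bounds.

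The interesting case is $\biguplus = \sum$, which I would handle by re-running Algorithm \ref{alg:steven} with the same modifications as in the proof of Theorem \ref{thm:closestTCBL}, except that $\mu_S$ is redefined as a partial \emph{sum} rather than a partial maximum: for a switching $S$ of a blob $B$, let $\mu_S$ be the sum of $\left| \lambda_T(a) - \lambda_{\widetilde{T}}(\tilde{a}) \right|^d$ over all arcs $a$ of the subtree of $\widetilde{T}$ obtained by combining $S$ with its chosen child switchings, again excluding the root arc of that subtree (whose matching length above $S$ is unknown when $S$ is defined). Concretely, Step \ref{step:internal_checks} initialises
\[
\mu_S := \sum_{(u',v')\in T(B)} \left| \lambda_T(u',v') - \ell_S(u,v) \right|^d,
\]
Step \ref{step:foreach_blob} selects, for each child blob $C_i$ reached through arc $a_i$, the stored switching $S'$ minimising
\[
\mu_{S'} + \left| \ell_S(u,a_i) + \lambda_N(a_i) + \ell_{S'}(a_i,v) - \lambda_T(u',v') \right|^d,
\]
adding this minimum to $\mu_S$; Steps \ref{step:what_to_store}--\ref{step:conclusion} store and finally combine switchings exactly as before, now selecting the root-blob switching of smallest $\mu_S$.

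Correctness rests on the same optimality principle stated in the Observation within Theorem \ref{thm:closestTCBL}, and the one step I expect to require genuine (if mild) care is verifying that this principle survives the replacement of $\max$ by $\sum$. Concretely, one must check that the objective remains \emph{separable} over the children of each blob: once $S$ is fixed, the total sum splits into the fixed internal contribution of $S$ plus, for each child $C_i$, a term depending only on $S'$ (through $\mu_{S'}$ and the single crossing arc $a_i$) and not on the switchings chosen for the other children. Since $\sum$ is monotone in each summand, the global minimum is attained by minimising each child's term independently, which is exactly what the recursion does; this monotonicity-and-separability argument is precisely what makes both $\max$ and $\sum$ admissible aggregation operators here. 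Finally, none of these changes affects the cost analysis: each query still scans the $O(2^k)$ switchings stored for a child blob, there are still $O(2^k\cdot n)$ queries in total, and the data stored per switching grows only by the scalar $\mu_S$, so the bounds remain $O(2^{2k}\cdot n)$ time and $O(k\cdot 2^k\cdot n)$ space, matching Theorem \ref{thm:closestTCBL}.
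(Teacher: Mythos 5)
Your proposal is correct and follows essentially the same route as the paper, whose proof of this theorem is a one-liner: rerun the dynamic program of Theorem \ref{thm:closestTCBL} with $|\cdots|$ replaced by $|\cdots|^d$ and, for the sum variant, $\max$ replaced by $\sum$. Your write-up simply makes explicit what the paper leaves implicit — the monotonicity shortcut for the $\max$ case and the separability-plus-monotonicity argument showing the optimality principle (the Observation) survives the switch to $\sum$ — with no change to the complexity bounds.
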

\begin{proof}
In the proof of Theorem \ref{thm:closestTCBL}, replace every occurrence of $|\ldots|$ with $|\ldots|^d$, and -- if $\biguplus$ represents a sum -- replace every occurrence of $\max$ with $\sum$.
\qed
\end{proof}

It is worth pointing out that the algorithm described in the proofs above requires storing \emph{all} switchings of a blob that are topologically compatible with the input tree. This is unlike the algorithms shown before, where a number of checks on arc lengths (quite stringent ones in the case of the algorithm for TCBL) ensure that, on realistic instances, the number of switchings stored for a blob with $k$ reticulations will be much smaller than $2^k$.

Moreover, again unlike the previous algorithms, the queries at step 
\ref{step:foreach_blob} involve considering \emph{all} switchings stored 
for a child blob $C_i$, which is what causes the factor $2^{2k}$ in the
runtime complexity. 
We note that, for certain objective functions, it might be possible to make
this faster (with some algorithmic effort), but in order to achieve the 
generality necessary for Theorem \ref{thm:reformulations}, we have opted
for the simple algorithm described above.




\section{Discussion}

In this paper, we have considered the problem of determining whether a tree is
displayed by a phylogenetic network, when branch lengths are available.
We have shown that, if the network is permitted to have redundant blobs 
(i.e. nontrivial biconnected components with only one outgoing arc), 
then the problem becomes hard when at least one of the following two conditions hold: 
(1) the level of the network is unbounded (Theorem \ref{TCBLstrong}), 
(2) branch lengths are potentially long (Theorem \ref{TCBLlevel2}). 
If neither condition holds (i.e. branch lengths are
short and level is bounded) then -- even when redundancy is allowed -- 
the problem becomes tractable (Theorem~\ref{thm:pseudopoly_TCBL}).
We note that phylogenetic networks with redundant blobs are
unlikely to be encountered in practice, as their reconstructability
from real data is doubtful \cite{trinets1,trinets2,pardi2015reconstructible}. 
This is relevant because, if redundant blobs are not permitted, 
the problem becomes fixed-parameter tractable in the level of the
network (Theorem~\ref{thm:levk}) \emph{irrespective} of how long the branches are.


Building on our result on networks with no redundant blobs, 
we have then shown how the proposed strategy can be extended
to solve a number of variants of the problem
accounting for uncertainty in branch lengths.
This includes the case where an interval of possible lengths is
provided for each branch of the input tree (Theorem~\ref{thm:relaxedTCBL}),
and the case where we want to find -- among all trees displayed by the
network with the same topology as 
the input tree $T$ -- one that is 
closest to $T$, according to a number of measures of discrepancy between branch 
length assignments (Theorems \ref{thm:closestTCBL} and \ref{thm:reformulations}).

The fixed parameter algorithms we present here have runtimes and storage
requirements that grow exponentially in the level of the network.
However, in the case of storage, this is a worst-case scenario:
in practice, this will depend on the number of ``viable'' switchings stored 
for each blob, that is, the switchings that pass all checks on topology and 
branch lengths. 
In the case of the algorithm for \textsc{TCBL} (Theorem~\ref{thm:levk}),
where strict equalities between arc lengths in $T$ and path lengths in $N$ 
must be verified, we can expect it to be very rare that multiple switchings 
will be stored for one blob.
%
Similarly, in the case of the algorithm for \textsc{relaxed-TCBL} (Theorem~\ref{thm:relaxedTCBL}),
when the input intervals are sufficiently small, we can expect the number of 
stored switchings to be limited.
In some particular cases, 
it might even be possible to find the few viable switchings
for a blob, without having to consider all $O(2^k)$ switchings,
thus removing this factor from the runtime complexity as well.


The algorithm for \textsc{TCBL} (Algorithm \ref{alg:steven}) provides
a good example of the effect of taking into account branch lengths in
the tree containment problems:
if all checks on branch lengths are removed, 
what is left is an algorithm that solves the classic (topology-only)
tree containment problem, 
and also provides all ways to locate the input tree in the network
(for each blob, it can produce a list of possible images of the 
corresponding part of of the input tree).
This algorithm may run a little faster
than Algorithm \ref{alg:steven} (as no queries to child blobs are necessary).
However, for a small computational overhead, including branch lengths
allows to locate more precisely the displayed trees, and provides 
more strict answers to the tree containment problem.

\bibliographystyle{spmpsci}
\bibliography{references}


\end{document}